\documentclass[11pt, a4paper]{article}
\pdfoutput=1 %

\usepackage{fullpage}

\usepackage[utf8]{inputenc}
\usepackage[T1]{fontenc}

\usepackage{libertine}

\newif\ifshortJournalName
\shortJournalNamefalse
\newif\ifshortConferenceName
\shortConferenceNamefalse

\makeatletter
\newcommand{\formatConferenceAbbr}[2]{{#1}~'\@gobbletwo#2}

\makeatother

\usepackage{amsfonts}
\usepackage{amsmath}
\usepackage{amssymb}
\usepackage{stmaryrd}
\usepackage{mathtools}
\usepackage{nicefrac}
\usepackage{bm}
\usepackage{mathrsfs}
\usepackage{amsthm}

\usepackage{xspace}
\usepackage{orcidlink}
\usepackage{authblk}

\usepackage{lineno}

\usepackage[inline]{enumitem}

\usepackage{xcolor}
\usepackage{tikz}
\usetikzlibrary{backgrounds,calc,patterns,nfold,positioning}
\usepackage{pgfplots}
\usepackage{subcaption}
\pgfplotsset{compat=1.18}
\usepackage{float}
\usepackage{wrapfig}

 \usepackage{hyperref}
\hypersetup{
    pdfencoding=auto, 
    psdextra,
    colorlinks=true,
    citecolor=green!40!black,
    linkcolor=red!50!black,
    urlcolor=blue!80!black
}

\usepackage{cleveref}
\usepackage[sort,compress]{cite}
\usepackage{doi}
\usepackage{natbib}

\usepackage{tabularx}
\usepackage{booktabs}
\usepackage{multirow}

\newcommand{\N}{\mathbb{N}}

\newcommand{\R}{\mathbb{R}}

\newcommand{\Yes}{\textsc{Yes}\xspace}

\newcommand{\agents}{{\ensuremath{{N}}}}    %
\newcommand{\numAgents}{{\ensuremath{{n}}}} %

\newcommand{\alloc}{\pi}

\DeclareMathOperator{\util}{u}
\DeclareMathOperator{\USW}{USW}
\DeclareMathOperator{\NSW}{NSW}

\DeclareMathOperator{\dist}{dist}

\newcommand{\probName}[1]{\textsc{#1}\xspace}

\newcommand{\Oh}[1]{{\mathcal{O}\mathopen{}\left(#1\right)}}
\newcommand{\oh}[1]{{o\mathopen{}\left(#1\right)}}

\NewDocumentCommand{\cc}{ O{} O{} m }{\mbox{%
    \expandafter\ifx\expandafter\relax\detokenize{#2}\relax\else{#2-}\fi%
    \textsf{#3}%
    \expandafter\ifx\expandafter\relax\detokenize{#1}\relax\else{-#1}\fi%
    }\xspace}
\newcommand{\DeclareComplexityLowerBound}[1]{%
  \expandafter\newcommand\csname #1\endcsname{\cc{#1}}%
  \expandafter\newcommand\csname #1h\endcsname{\cc[hard]{#1}}%
  \expandafter\newcommand\csname #1hness\endcsname{\cc[hardness]{#1}}%
  \expandafter\newcommand\csname #1c\endcsname{\cc[complete]{#1}}%
  \expandafter\newcommand\csname #1cness\endcsname{\cc[completeness]{#1}}%
}
\newcommand{\DeclareComplexityLowerBoundParam}[2]{%
  \expandafter\newcommand\csname #1\endcsname[1][#2]{\cc{#1[##1]}}%
  \expandafter\newcommand\csname #1h\endcsname[1][#2]{\cc[hard]{#1[##1]}}%
  \expandafter\newcommand\csname #1hness\endcsname[1][#2]{\cc[hardness]{#1[##1]}}%
  \expandafter\newcommand\csname #1c\endcsname[1][#2]{\cc[complete]{#1[##1]}}%
  \expandafter\newcommand\csname #1cness\endcsname[1][#2]{\cc[completeness]{#1[##1]}}%
}
\newcommand{\DeclareComplexityUpperBound}[1]{
  \expandafter\newcommand\csname #1\endcsname{\cc{#1}}%
}

\DeclareComplexityLowerBound{NP}
\DeclareComplexityLowerBound{paraNP}
\DeclareComplexityLowerBound{coNP}
\DeclareComplexityLowerBoundParam{W}{1}
\DeclareComplexityLowerBoundParam{coW}{1}
\DeclareComplexityUpperBound{FPT}
\DeclareComplexityUpperBound{XP}

\newtheorem{theorem}{Theorem}

\newtheorem{corollary}{Corollary}
\newtheorem{proposition}{Proposition}

\newtheorem{remark}{Remark}
\newtheorem{observation}{Observation}%
\Crefname{observation}{Observation}{Observations}
\newtheorem{claim}{Claim}%
\Crefname{claim}{Claim}{Claims}

\newenvironment{claimproof}[1]{\emph{Proof.}\hspace{0.15cm}#1}{\hfill$\blacktriangleleft$\medskip}
\newcommand{\proofparagraph}[1]{\smallskip\emph{#1}\hspace{0.15cm}}

\usepackage{pict2e}
\newcommand{\opentriangle}{%
  \raisebox{0.2pt}{\makebox[0.77778em]{%
    \setlength{\unitlength}{0.6em}%
    \linethickness{0.4pt}%
    \begin{picture}(1,1)
    \polygon(0,0)(1,0)(1,1)
    \end{picture}%
  }}%
}

\definecolor{cbGood}{HTML}{1A9850} %
\definecolor{cbHard}{HTML}{D73027} %
\definecolor{cbOpen}{HTML}{E69F00}
\title{Network Allocation Games with Anonymous Preferences\footnote{An extended abstract of this work has been published in the proceedings of the 19th International Symposium on Algorithmic Game Theory, SAGT '26~\citep{DeligkasESV2026}. Compared to the conference version, this article contains full proofs of all results, several of which were omitted or only sketched due to space constraints. Moreover, \Cref{thm:beta-hurting:Non-existence}, the \W-membership part of \Cref{thm:USW:NPc}, and \Cref{cor:PO:verify:ETH} are new.}}

\author[1]{Argyrios Deligkas}
\author[1]{Eduard Eiben}
\author[2,3]{Šimon Schierreich}
\author[4]{Alexandros A. Voudouris}

\affil[1]{Royal Holloway, University of London, United Kingdom}
\affil[2]{AGH University of Krakow, Poland}
\affil[3]{Czech Technical University in Prague, Czechia}
\affil[4]{University of Southern Denmark, Denmark}

\begin{document}
\maketitle              %
\begin{abstract}
    We study network allocation games in which a set of agents must be allocated to (a subset of) the vertices of a graph topology.
    The agents are anonymous and strategic: Each of them aims to maximize her utility, which is a function of the number of agents in the neighborhood of their allocated vertex. 
    We focus on the existence and the computation of stable allocations under various notions of stability. 
    More specifically, we study \textit{swap-based} stability notions, where agents can exchange locations between them, and \textit{jump-based} stability notions, where agents can jump to an empty vertex of the graph.
    For almost all stability notions we consider, we provide dichotomies between existence and intractability
    that depend on the structure of the utility functions of the agents and the topology.
    In addition, we prove strong intractability results for computing welfare-maximizing allocations and for verifying whether a given allocation is Pareto optimal.
    
\end{abstract}

\section{Introduction}\label{sec:introduction}

Imagine your university has just opened a new campus aiming to expand, and you, as an already senior member of your department, have been assigned the important task of deciding how to allocate your newly-recruited colleagues in the available offices. 
Interestingly, the building in the new campus is a refurbished old factory, and the office space is quite unusual and does not resemble a grid; instead, trying to optimize the number of people it can fit, the space was partitioned like a pretty arbitrary graph. 
In addition, you do not really know your new colleagues, and, maybe with a few exceptions, they do not know each other either. So the only information you can get about their preferences is about how comfortable they would be with how many people have neighboring offices, or even are in shared rooms. In other words, they express anonymous preferences that assign values depending on the number of their neighbors. 
Your task then boils down to assigning offices such that the resulting allocation is ``stable'', i.e., none of your colleagues has any incentive to request a change. 

This toy, and, of course, completely imaginary example, leads to an interesting network allocation game, which, to the best of our knowledge, has not been studied in the past. Beyond office allocation, similar considerations arise in many other settings, such as assigning jobs to the interconnected machines in a computing cluster, where a job's performance depends on the number of busy neighboring machines.
More formally, in the game, we have a set of agents that must be allocated to (a subset of) the vertices of a graph. 
Crucially, though, the agents are {\em anonymous} and {\em strategic}. Each of them aims to maximize her utility, which is a function of the {\em number} of agents in the neighborhood of her allocated vertex, but {\em not} of their identity. 
Then, the goal is to find a {\em stable allocation} of the agents to the vertices, if such an allocation exists. Put simply, we want an allocation such that no agent can improve by (ex)changing her position.

This network allocation game shares common characteristics with two important classes of games that have attracted significant attention, that of {\em Hedonic games} (see, e.g., \citep{BanerjeeKS2001,BogomolnaiaJ2002,Ballester2004}) and that of {\em Schelling games} (e.g., see \citep{schelling-journal,chauhan2018schelling}). 
In Hedonic games, the objective is to compute a partition of the agents into groups that is stable with respect to the preferences of the agents over the members of their groups. 
In Schelling games, the agents have types, and the objective is to compute an allocation of them to the vertices of a graph that is stable, given that each agent aims to optimize a function of the number of agents of the same type in their neighborhood. 

In both cases, two main stability notions have been studied: {\em swap-stability}, where no pair of agents can mutually benefit by exchanging their locations, and {\em jump-stability}, where no agent has incentives to unilaterally move to another group or empty vertex (possibly under some restrictions).

Despite their similarities, neither of these models fully captures the one we study in this work. In Hedonic games, there is no underlying topology, whereas in Schelling games, the preferences of the agents are not anonymous. 
By embedding anonymous hedonic preferences into a graph structure, our model captures both the richness of network topology and the simplicity of aggregate-based preferences, leading to our question of interest: 

\begin{quote}
   \em  What type of stable outcomes exist for anonymous network allocation games, and how difficult is it to find them?
\end{quote}

\subsection{Our Contribution}\label{sec:contribution}

We study the existence and complexity of finding stable outcomes in anonymous network allocation games with respect to three dimensions: a) the stability notion; b) the structure of the utility functions of the agents; c) the structure of the topology graph. 
For the majority of possible combinations of the restrictions mentioned above, we provide dichotomies with respect to the existence of stable allocations {\em and} the complexity of computing them: either the stability concept always exists and can be computed in polynomial time, or it does not always exist and deciding whether an instance admits a solution is an intractable problem.

We begin by studying various notions of swap-based stability. We first prove that swap-stable allocations (in which no pair of agents wants to exchange their locations) always exist, independently of the topology graph and the utility functions of the agents, and can be found in quadratic time, with respect to the number of agents, via natural swap dynamics (\Cref{thm:swap_stable}). In fact, the same proof shows the same result even for {\em weak} swap-stable allocations, where at least one agent strictly increases her utility from the swap. 
We also consider envy-free allocations (in which no agent prefers the vertex allocated to a different agent more than her own); envy-freeness is the most prominent fairness notion and is stronger than swap-stability since it requires that all agents are completely satisfied with the allocation. Because of this, besides the notable case of two agents, we show that envy-free allocations do not always exist when there are at least three agents (\Cref{thm:alpha-envy-free:Non-existence}), and it is, in general, \NPc to decide their existence (\Cref{thm:alpha-envy-free:Hardness}). 
In fact, we show that our \NPcness result holds for two other cases: approximately envy-free allocations, and~$\beta$-hurting swap-stable allocations, a notion that interpolates between weak swap-stability and envy freeness. Our hardness result holds for any non-trivial approximation in both cases, even for restricted utility functions. Moreover, we show that for every~$\beta\in(0,1)$, a~$\beta$-hurting swap stable allocation may fail to exist, already with three agents (\Cref{thm:beta-hurting:Non-existence}).
On the positive side, we prove that both problems are \FPT parameterized by the number of agents (\Cref{thm:alpha-envy-free:FPT}).

We next jump (sic) to jump-based stability notions. We first observe that the structure of the utility functions of the agents plays a crucial role in the existence and computation of jump-stable allocations. When agents have binary utilities, jump-stable allocation do not always exist, even when the graph topology is a path and we have two agents (\Cref{thm:jump:Non-existence}) and in general it is \NPc to compute them (\Cref{thm:jump:binary:NPc}). On the contrary, when the utility functions are all monotone of the same type (all increasing or all decreasing), a jump-stable allocation can always be reached via the natural dynamics in polynomial time (\Cref{thm:jump:monotone}).

We also show several results for the more restrictive notion of~$\ell$-jump-stability according to which no agent has an incentive to jump to an empty vertex at a distance of at most~$\ell$ from her current location. In particular, we show that, when agents have binary utility functions, a~$1$-jump-stable allocation always exists on tree topologies and can be computed efficiently via a simple greedy algorithm (\Cref{thm:oneJump:binary:trees}), but does not necessarily exist when the graph contains cycles, even for identical binary, or single peaked, utilities (\Cref{thm:1-jump-counterexample}). 
In addition, we prove that it is \NPc to decide the existence of a~$2$-jump-stable allocation when agents have binary utilities (\Cref{thm:twoJump:binary:NPc}). The complexity of~$1$-jump stable allocations remains open.

We finally consider the problem of maximizing social welfare for several different definitions. Specifically, we focus on the utilitarian social welfare (USW; the total utility of the agents), the egalitarian social welfare (ESW; the minimum utility over all agents), the Nash social welfare (NSW; the geometric mean of the agents' utilities), and Pareto optimality (PO; according to this notion, there is no other allocation exists such that an agent has strictly larger utility and no other agent is worst off). We first show that it is \NPc and \Wc with respect to the number of agents to compute welfare-maximizing allocations for USW, ESW, and NSW (\Cref{thm:USW:NPc}). We then observe that even though a PO allocation always exists, it is \coNPh and \coWh under the same parameterization to verify if an allocation is indeed PO (\Cref{thm:PO:verify:numAgents:coWh}). We complement these strong impossibilities with an XP algorithm that solves all these problems and is, for the three welfare-maximization problems, asymptotically optimal under the Exponential Time Hypothesis (\Cref{thm:ESW:numAgents:XP}).

\subsection{Related Work}\label{sec:relatedWork}

As already stressed, most closely related to our model are \emph{Schelling games} \citep{schelling-journal,bilo2022topological,bilo2023continuous,ijcai2022p12,bullinger2021welfare,chauhan2018schelling,echzell2019convergence,chan2020schelling,kanellopoulos2020modified,kanellopoulos2023tolerance,Kreisel2024equilibria,deligkas2024parameterized,schelling1969models,schelling1971dynamic} and \emph{anonymous hedonic games}~\citep{BanerjeeKS2001,BogomolnaiaJ2002,Ballester2004,ConstantinescuLRSV2024,FioravantesGMS2026}.
In contrast to Schelling games, where the agents are partitioned into types and their utilities are only implicitly determined by the type composition of their neighborhood\footnote{The vast majority of papers focus on agents with homophilic preferences. However, recently, diversity- and variety-seeking preferences were also considered~\citep{NarayananOTV2025,NarayananSV2025}.}, we work directly with explicit utility functions, allowing each agent an arbitrary function of the number of occupied neighbors. In contrast to anonymous hedonic games, which likewise involve no types but partition the agents into disjoint coalitions, our agents are placed on a topology in which their neighborhoods overlap. A similar contrast applies to \emph{load balancing games}~\citep{Vocking2007,EvenDarKM2007}, where an agent's cost depends on the number (or load) of agents assigned to the same machine: the machines are disjoint bins with no topology among them, which is the key reason why stable outcomes there are guaranteed by potential-function arguments, whereas in our game they may fail to exist.

Anonymous, size-based preferences are also at the core of the \emph{group activity selection problem} (GASP)~\citep{DarmannEKLSW2018}, introduced as a natural generalization of anonymous hedonic games, in which each agent has preferences over pairs consisting of an activity and the number of its participants. Its variant on social networks, gGASP~\citep{IgarashiPE2017}, additionally requires the group engaging in an activity to form a connected subgraph of a network on the agents. In both models, however, agents are assigned to activities, and the network (when present) only constrains which coalitions are feasible; in our game, the agents are instead allocated to the vertices of the topology itself, which determines their neighborhoods and hence their utilities.

A related class of problems is the so-called \emph{refugee housing}~\citep{KnopS2023,Schierreich2023,Schierreich2024,LisowskiS2025}, where agents are likewise placed on the empty vertices of a graph topology and an agent's satisfaction depends on the number of agents in her neighborhood. The crucial difference is that these models feature a set of preallocated agents (the inhabitants) who are fixed in place, whereas in our game, every agent is free to deviate. This property makes both classes very different. Moreover, the anonymous preferences studied in refugee housing take the form of approved sets or upper bounds on the number of accepted neighbors, while we allow for much more general classes of utility functions.

A further group of games places agents on a structure and derives the utility of an agent from the agents located near her. In the \emph{seat arrangement problem}~\citep{BodlaenderHJOOZ2025,Wilczynski2023,BerriaudCW2023,CeylanCR2026}, agents are assigned to seats and care about who exactly is placed next to them, whereas in \emph{social distance games}~\citep{BranzeiL2011,KaklamanisKP2018,BalliuFMO2019,BalliuFMO2022,GanianHKRSS2023}, an agent's utility aggregates her distances to the other agents in her connected component. Finally, in \emph{topological distance games}~\citep{BullingerS2023,DeligkasEKS2024b} and \emph{distance preservation games}~\citep{AzizCLNW2025,DeligkasEGKS2026}, the utility of an agent depends on her graph distances to specific other agents. In contrast to all these distance-based models, the utility in our game is \emph{anonymous} and \emph{local}: it is determined solely by the number of occupied vertices in an agent's immediate neighborhood.

\section{The Model}\label{sec:model}
We consider a network allocation game~$\Gamma = (G, \agents, (\util_i)_{i \in \agents})$ consisting of a set~$\agents$ of~$\numAgents$ \emph{agents} and an undirected graph~$G=(V, E)$ with~$|V| \geq \numAgents$ to which we refer as the {\em topology}. By~$N_G(v) = \{u \in V \mid uv \in E\}$ we denote the \emph{neighborhood} of a vertex~$v\in V$, i.e., the set of all vertices adjacent to~$v$. The goal is to assign the agents to the vertices of the topology. Formally, an \emph{allocation} is an injective function~$\alloc\colon \agents\to V$ that assigns to each agent her host vertex; we also call it the \emph{location} of the agent. Observe that, by injectivity, every vertex hosts at most one agent; that is, the agents cannot share a vertex.
Additionally, each agent~$i\in\agents$ is associated with a \emph{utility function}~${\util_i\colon [\numAgents-1]_0\to\R_0^+}$. Intuitively, the function returns the utility agent~$i$ experiences if there are~$x\in[\numAgents-1]_0$ agents allocated to her direct neighborhood. We naturally extend the notation from numbers to allocations and set~$\util_i(\alloc) = \util_i(|\{ j \in \agents \mid \alloc(j) \in N_G(\alloc(i)) \}|)$. We say that the utility function~$\util_i$ is:

\begin{figure}[tb!]
    \centering
    \begin{subfigure}{0.32\textwidth}
    \centering
    \begin{tikzpicture}
    \begin{axis}[
        axis lines=middle,
        xmin=0, xmax=5,
        ymin=0, ymax=5,
        xtick={0,1,2,3,4,5},
        ytick={0,1,2,3,4,5},
        xlabel={$x$},
        ylabel={$\util(x)$},
        width=\textwidth,
        height=0.7\textwidth
    ]
        \addplot[cbGood, very thick] coordinates {
            (0,1) (1,1.8) (2,2.6) (3,3.4) (4,4.2) (5,5)
        };
    \end{axis}
    \end{tikzpicture}
    \caption{Linearly increasing}
    \end{subfigure}
    \begin{subfigure}{0.32\textwidth}
    \centering
    \begin{tikzpicture}
    \begin{axis}[
        axis lines=middle,
        xmin=0, xmax=5,
        ymin=0, ymax=3,
        xtick={0,1,2,3,4,5},
        xlabel={$x$},
        ylabel={$\util(x)$},
        width=\textwidth,
        height=0.7\textwidth
    ]
        \addplot[cbGood, very thick] coordinates {
            (0,0.5) (1,0.75) (2,2) (3,2.1) (4,2.4) (5,3)
        };
    \end{axis}
    \end{tikzpicture}
    \caption{Increasing}
    \end{subfigure}
    \begin{subfigure}{0.32\textwidth}
    \centering
    \begin{tikzpicture}
    \begin{axis}[
        axis lines=middle,
        xmin=0, xmax=5,
        ymin=0, ymax=3,
        xtick={0,1,2,3,4,5},
        xlabel={$x$},
        ylabel={$\util(x)$},
        width=\textwidth,
        height=0.7\textwidth
    ]
        \addplot[cbGood, very thick] coordinates {
            (0,0.25) (1,1) (2,1) (3,2) (4,2) (5,2.5)
        };
    \end{axis}
    \end{tikzpicture}
    \caption{Non-decreasing}
    \end{subfigure}
    \vspace{0.5cm}

    \begin{subfigure}{0.32\textwidth}
    \centering
    \begin{tikzpicture}
    \begin{axis}[
        axis lines=middle,
        xmin=0, xmax=5,
        ymin=-0.2, ymax=2,
        xtick={0,1,2,3,4,5},
        ytick={0,1,2},
        xlabel={$x$},
        ylabel={$\util(x)$},
        width=\textwidth,
        height=0.7\textwidth
    ]
        \addplot[cbGood, very thick] coordinates {
            (0,0) (1,1) (2,1) (3,0) (4,0) (5,1)
        };
    \end{axis}
    \end{tikzpicture}
    \caption{Binary}
    \end{subfigure}
    \begin{subfigure}{0.32\textwidth}
    \centering
    \begin{tikzpicture}
    \begin{axis}[
        axis lines=middle,
        xmin=0, xmax=5,
        ymin=0, ymax=4.5,
        xtick={0,1,2,3,4,5},
        xlabel={$x$},
        ylabel={$\util(x)$},
        width=\textwidth,
        height=0.7\textwidth
    ]
        \addplot[cbGood, very thick] coordinates {
            (0,0.5) (1,2.5) (2,4) (3,3) (4,1.5) (5,0.5)
        };
    \end{axis}
    \end{tikzpicture}
    \caption{Single-peaked}
    \end{subfigure}
        
    \caption{Illustrations of the restrictions of utility functions used throughout the paper. Additionally, we define linearly decreasing, decreasing, and non-increasing utilities, which are natural counterparts of linearly increasing, increasing, and non-decreasing utilities, respectively.}
    \label{fig:restrictions:examples}
\end{figure}

\begin{itemize}
    \item \emph{linearly increasing} if there exist~$a\in\N$ and~$b\in\N_0$ such that~$\util_i(x) = a\cdot x + b$ for every~$x\in[\numAgents-1]_0$;
    
    \item \emph{increasing} if~$\util_i(x) < \util_i(y)$ whenever~$x < y$;
    
    \item \emph{non-decreasing} if~$\util_i(x) \leq \util_i(y)$ whenever~$x<y$;

    \item \emph{linearly decreasing} if there exist~$a\in\N$ and~$b\in\N_0$ with~$b \geq a\cdot(\numAgents-1)$ such that~$\util_i(x) = b - a\cdot x$ for every~$x\in[\numAgents-1]_0$;

    \item \emph{decreasing} if~$\util_i(x) > \util_i(y)$ whenever~$x < y$;

    \item \emph{non-increasing} if~$\util_i(x) \geq \util_i(y)$ whenever~$x < y$;

    \item \emph{binary} if~$\util_i(x) \in \{0,1\}$ for every~$x\in[\numAgents-1]_0$;
    
    \item \emph{single-peaked} if there exists~$z^*\in[\numAgents-1]_0$ such that for every~$x,y\in[0,z^*]$,~$x<y$, it holds that~$\util_i(x) < \util_i(y)$, and for every~$x,y\in[z^*,\numAgents-1]$,~$x < y$, we have~$\util_i(x) > \util_i(y)$.%
    
\end{itemize}

\noindent 
See \Cref{fig:restrictions:examples} for an illustration of the restrictions of utility functions defined above and \Cref{fig:restrictions:relations} for their relationships. Additionally, we say that the utility function is \emph{normalized} if~$\util_i(0) = 0$, \emph{monotone} if it is non-decreasing or non-increasing, and \emph{participation appreciative} if~$\util_i(0) < \util_i(x)$ for every~$x\in[\numAgents-1]$. Observe that every increasing function (and therefore also every linearly increasing function) is also participation appreciative.

Binary utilities correspond to the well-studied dichotomous (approval) preferences~\citep{Peters2016}: an agent simply approves a set of acceptable neighborhood sizes. Single-peaked utilities capture an ideal amount of social interaction, with satisfaction decreasing in either direction away from the peak; single-peakedness is a classic domain restriction that was recently also studied in Schelling games~\citep{friedrich2023single}.

\begin{figure}[bt!]
    \centering
    \vspace{0.75cm}
    \begin{tikzpicture}[scale=0.85]
    
        \node (l) at (-6,0) {linearly increasing};
        \node (i) at (-6,1.25) {increasing};
        \node (m) at (-6,3.5) {non-decreasing};

        \node (b) at (6, 3.5) {binary};

        \node (spe) at (-2,3) {single-peaked};

        \node (ld) at (2,0) {linearly decreasing};
        \node (sd) at (2,1.25) {decreasing};
        \node (d) at (2,3.5) {non-increasing};

        \node (g) at (0, 5) {unrestricted};
        
        \draw[->] (l) -- (i);
        \draw[->] (i) -- (m);

        \draw[->] (ld) -- (sd);
        \draw[->] (sd) -- (d);
        \draw[->] (d) -- (g);
        \draw[->] (sd) -- (spe);

        \draw[<-] (g) -- (m);
        \draw[<-] (g.south) to (spe);
        \draw[<-] (g) -- (b);
        \draw[<-] (spe) -- (i);
    \end{tikzpicture}
    \caption{Relationships between our utility restrictions. An arrow from~$A$ to~$B$ means that whenever a utility function satisfies restriction~$A$, it also satisfies~$B$.}
    \label{fig:restrictions:relations}
\end{figure}

\subsection{Stability Notions}
We are interested in allocations that are \emph{stable}.
The first notion of stability we study is that of \emph{envy-freeness}, which is extensively studied in the fair division literature~\citep{AmanatidisABFLMVW2023,BramsT1996,Foley1967}, but was recently also considered in settings similar to ours; see, e.g.,~\citep{DeligkasEGKS2026,DeligkasEIKS2025,LiMN02023}. For~$\alpha \in [0,1]$, we say that agent~$i$ \emph{$\alpha$-envies}, agent~$j\neq i$ if~$\util_{i}(\alloc) < \alpha\cdot \util_i(\alloc^{i \leftrightarrow j})$, where~$\alloc^{i\leftrightarrow j}$ is the allocation obtained from~$\alloc$ by swapping only the locations of~$i$ and~$j$. An allocation without~$\alpha$-envious agents is called \emph{$\alpha$-envy-free}. If~$\alpha=1$, we will say that agent~$i$ \emph{envies} agent~$j$ instead of~$i$~$1$-envies~$j$, and an allocation is \emph{envy-free} instead of~$1$-envy-free, respectively.

One can argue that the requirement of envy-freeness, i.e., that no agent prefers another's location to her own, is too demanding, as it ignores whether the swap would be agreeable to the other party. Therefore, we also study a weaker notion of stability, in which both agents must agree to the potential swap of locations. Formally, a pair of agents~$i,j$ admits a \emph{swap deviation} if~$\util_i(\alloc) < \util_i(\alloc^{i\leftrightarrow j})$ and~$\util_j(\alloc) < \util_j(\alloc^{i\leftrightarrow j})$. An allocation in which no pair of agents admits a swap deviation is called \emph{swap stable}. A \emph{weak swap deviation} is then defined analogously, but we allow two agents to swap their locations when neither of them is worse off and at least one of them is strictly better off. Finally, an agent~$i$ can perform a~$\beta$-hurting swap deviation,~$\beta \in [0,1]$, with an agent~$j$ if~$\util_i(\alloc) < \util_i(\alloc^{i\leftrightarrow j})$ and~$\util_j(\alloc^{i\leftrightarrow j}) \geq (1-\beta)\cdot\util_j(\alloc)$, i.e., if~$i$ strictly improves her utility and the utility of~$j$ decreases by at most a factor of~$\beta$. Observe that if~$\beta = 0$, then~$\beta$-hurting swap deviations are equivalent to weak swap deviations, and when~$\beta = 1$, then a~$\beta$-hurting swap deviation is equivalent to~$i$ envying~$j$.

Finally, we say that an agent~$i$ admits a \emph{jump deviation} to an empty vertex~$v$ if~$\util_i(\alloc) < \util_i(\alloc^{i\mapsto v})$, where~$\alloc^{i\mapsto v}$ is the allocation obtained from~$\alloc$ by changing the location of~$i$ to~$v$. An allocation with no jump deviating agent is called \emph{jump stable}. We additionally restrict the space of possible jump deviations by introducing a parameter~$\ell$. Formally, an \emph{$\ell$-jump deviation} is a jump deviation of agent~$i$ to a vertex~$v$ such that~$\dist_G(\alloc(i),v) \leq \ell$, where~$\dist_G(u,w)$ is the length of the shortest path between~$u$ and~$w$ in~$G$; we set~$\dist_G(u,w) = \infty$ if no such path exists. Then, an allocation is \emph{$\ell$-jump stable} if no~$\ell$-jump deviation is possible.

\subsection{Efficiency Notions}
We also study several notions of economic efficiency, which are based on aggregating the utilities of the agents. 
In particular, we say that an allocation~$\alloc$ 
\begin{itemize}
    \item \emph{maximizes the utilitarian social welfare} (USW) if, for every allocation~$\alloc'$, 
    \[
        \USW(\alloc) \coloneq \sum_{i\in\agents} \util_i(\alloc) \geq \USW(\alloc')\,.
    \]

    \item \emph{maximizes the egalitarian social welfare} (ESW) if, for every allocation~$\alloc'$,
    \[
        \min_{i\in\agents} \util_i(\alloc) \geq \min_{i\in\agents}\util_i(\alloc')\,.
    \]

    \item \emph{maximizes the Nash social welfare} (NSW) if, for every allocation~$\alloc'$,
    \[
        \NSW(\alloc) \coloneq \prod_{i\in\agents} \util_i(\alloc) \geq \NSW(\alloc')\,.
    \]

    \item is \emph{Pareto optimal} (PO) if there is no allocation~$\alloc'$ such that
    \[
        \forall i \in \agents\colon \util_i(\alloc') \geq \util_i(\alloc) \text{ \ \ and \ \ } \exists i \in \agents\colon \util_i(\alloc') > \util_i(\alloc).
    \]
    \end{itemize}

\section{Swap-Based Deviations}\label{sec:swaps}

In this section, we study swap-based stability solutions, and we prove a dichotomy that depends on the solution concept; see Figure~\ref{fig:overview:swapResults}.

\begin{figure}[bt!]
    \centering
    \begin{tikzpicture}
        \node (EF) at (0,0) {envy-free};
        \node (aEF) at (-2,-2) {$\alpha$-envy-free};
        \node (zEF) at (-2,-4) {$0$-envy-free};

        \node (hsw) at (2,-1.25) {$\beta$-hurting swap};
        \node (wsw) at (2,-2.75) {weakly swap stable};
        \node (sw) at (2,-4) {swap stable};

        \draw[->] (EF) -- (aEF);
        \draw[->] (EF) -- (hsw);
        \draw[->] (aEF) -- (zEF);
        \draw[->] (hsw) -- (wsw);
        \draw[->] (wsw) -- (sw);

        \begin{pgfonlayer}{background}
            \draw[cbGood,very thick,fill=cbGood,fill opacity=0.15,rounded corners] ($(wsw.north east)+(0.25,0)$) rectangle ($(zEF.south west)+(-0.25,0)$); 
            \draw[cbHard,very thick,fill=cbHard,fill opacity=0.45,rounded corners] ($(aEF.south west)+(-0.2,0)$) rectangle ($(hsw.north east)+(0.5,1.25)$);
        \end{pgfonlayer}
    \end{tikzpicture}
    \caption{An overview of our results for swap-based deviations. A red background indicates non-existence and \NPhness, even if the utilities are binary, linearly increasing, or linearly decreasing (for~$\beta$-hurting swap stability, increasing or decreasing utilities only); a green background indicates existence and polynomial-time solvability (even for unrestricted utilities).}
    \label{fig:overview:swapResults}
\end{figure}

We begin by proving that swap stable allocations always exist no matter the utility functions of the agents, and can be efficiently found via natural dynamics.

\begin{theorem}
\label{thm:swap_stable}
    Every network allocation game~$\Gamma$ admits at least one (weakly) swap stable allocation. Moreover, the allocation can be found in polynomial time. 
\end{theorem}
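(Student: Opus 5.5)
The plan rests on one structural observation: \emph{a swap never changes the set of occupied vertices}. Fix an allocation~$\alloc$ and let~$S=\alloc(\agents)$ be the occupied set. For every vertex~$v\in V$, define~$c_S(v)=|N_G(v)\cap S|$, the number of occupied neighbours of~$v$. The utility of the agent at~$v$ is~$\util_i(c_S(v))$. After swapping~$i$ and~$j$, the occupied set is still~$S$, because~$\alloc(i)$ is now occupied by~$j$ and vice versa. So agent~$i$ now gets exactly~$\util_i(c_S(\alloc(j)))$, and~$j$ gets~$\util_j(c_S(\alloc(i)))$. The case~$\alloc(i)\alloc(j)\in E$ needs a quick check, and it works out: the neighbour that~$i$ sees at her new position is~$j$ sitting at~$\alloc(i)$.

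Once~$S$ is fixed, the game among swaps is therefore a house-allocation problem. The houses are the vertices of~$S$, each carrying a fixed label~$c_S(v)$, and each agent~$i$ ranks the houses by~$\util_i(c_S(v))$.

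\textbf{Existence.} I would argue via a potential function. Fix~$S$ arbitrarily, say any~$\numAgents$ vertices of~$G$, which exist since~$|V|\geq\numAgents$. Consider any weak swap deviation: neither agent is worse off and at least one is strictly better off. Such a deviation strictly increases~$\USW$ and stays within the finitely many bijections~$\agents\to S$. Hence weak-swap dynamics terminate, and they terminate in a weakly swap stable allocation. A swap deviation is in particular a weak one, so this allocation is also swap stable.

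\textbf{Polynomial time.} Since utilities are arbitrary reals, the number of improving steps is not obviously polynomial, so I would not rely on the dynamics for the running-time bound. Instead I would construct the allocation directly by serial dictatorship on~$S$:
\begin{enumerate*}[label=(\roman*)]
\item compute~$c_S(v)$ for all~$v\in S$;
\item process the agents in an arbitrary order, each taking a remaining vertex of~$S$ that maximises her utility, breaking ties arbitrarily.
\end{enumerate*}

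It remains to show that no weak swap deviation exists in the result. Suppose agents~$i$ and~$j$ admit one, and let~$i$ be the agent who picked earlier. At her turn, the vertex eventually given to~$j$ was still available, so~$i$ weakly prefers her own vertex. For the swap to be weakly improving for~$i$, she must therefore be exactly indifferent. Then~$j$ must be the agent who strictly gains, i.e., she strictly prefers~$i$'s vertex. But that is only possible if~$i$ had strictly preferred~$j$'s vertex, which contradicts the indifference just established. Hence the result is weakly swap stable, and therefore swap stable.

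Computing all~$c_S(v)$ takes~$\Oh{|V|+|E|}$ time, and the dictatorship takes~$\Oh{\numAgents^2}$ time. The main obstacle I expect is the first step, namely the careful verification that the neighbourhood counts are invariant under swaps, including the adjacent-pair case; everything after that is standard house-allocation reasoning.
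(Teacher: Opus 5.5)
Your invariance observation and the existence argument via $\USW$ are correct and match the paper's proof. Your serial dictatorship also gives a swap stable allocation in polynomial time, because the earlier picker can never strictly gain. The gap is in the weak case. The last step of your stability argument, ``that is only possible if $i$ had strictly preferred $j$'s vertex'', is a non sequitur. The functions $\util_j$ and $\util_i$ are unrelated, and since $j$ picked after $i$, the vertex $\alloc(i)$ was simply unavailable to her.

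In fact, serial dictatorship with arbitrary tie-breaking is not weakly swap stable. Take the star with center $c$ and leaves $\ell_1,\ell_2$, and three agents processed in the order $i,j,k$. Let $\util_i\equiv 1$, and let $\util_j(0)=\util_j(1)=0$, $\util_j(2)=1$. Agent $i$ may take $c$ by tie-breaking, and $j$ then gets a leaf. Swapping $i$ and $j$ leaves $i$ indifferent, while $j$ improves from $0$ to $1$. So your proposal does not show that a weakly swap stable allocation can be computed in polynomial time, which the theorem claims.

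The missing idea is that the dynamics you dismissed already runs in polynomial time, whatever the magnitudes of the real utilities. Along weak-swap dynamics, no agent's utility ever decreases: the two swapping agents are not worse off, and by your invariance observation everyone else is unaffected. Meanwhile, each step strictly increases the utility of at least one agent. Since $\util_i$ takes at most $\numAgents$ distinct values on $[\numAgents-1]_0$, each agent's utility can strictly increase at most $\numAgents-1$ times. Hence the dynamics stops after at most $\numAgents(\numAgents-1)$ swaps, each found by checking $\Oh{\numAgents^2}$ pairs. This is exactly how the paper bounds the running time. You could also run serial dictatorship first and then these dynamics, but you would need the same counting bound to repair ties.
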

\begin{proof}
    Starting from an arbitrary allocation, we repeatedly allow a pair of agents~$i,j$ to swap locations if their utility (or at least one of them in the case of weak swap stability) increases. Clearly, if there is no (weakly) profitable swap, then the allocation is already swap stable. Moreover, observe that the utilitarian social welfare (the total utility of the agents) increases with each swap:~$i$ and~$j$ increase their utility, and the utility of any agent different than~$i$ and~$j$ remains the same (since it only depends on the number of her neighbors, which does not change once~$i$ and~$j$ swap their locations). Consequently, the dynamics converges to a swap stable allocation.
    Furthermore, observe that no agent's utility ever decreases along the dynamics, while each swap strictly increases the utility of at least one agent. Since the utility function of each agent takes at most~$\numAgents$ distinct values, each agent's utility can strictly increase at most~$\numAgents - 1$ times, and hence the dynamics converges to a (weakly) swap stable allocation after at most~$\numAgents(\numAgents - 1) \in \Oh{\numAgents^2}$ swaps. 
\end{proof}

We note that our proof implies a stronger result. In every anonymous network allocation game, applying (weak) swap dynamics in any initial allocation converges to a (weak) swap stable allocation after at most~$\numAgents^2$ steps.

Next, we switch our attention to envy-free allocations. We begin by observing that when there are only two agents, any allocation is trivially envy-free; this follows from the fact that the two locations are symmetric, i.e., they either both have zero neighbors, or one.

\begin{observation}
    If there are two agents, any allocation is envy-free.
\end{observation}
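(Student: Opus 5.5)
The plan is to argue directly from the definitions, by comparing the neighbourhood counts an agent sees before and after a swap. Let the two agents be $i$ and $j$, with locations $\alloc(i)=u$ and $\alloc(j)=w$. Since $n=2$, the number of agents in the neighbourhood of either agent is $0$ or $1$. For $i$ it is $1$ exactly when $w\in N_G(u)$, that is, when $uw\in E$. For $j$ it is $1$ exactly when $u\in N_G(w)$, which is again $uw\in E$ because $G$ is undirected.

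Next I will examine $\alloc^{i\leftrightarrow j}$, in which $i$ sits on $w$ and $j$ sits on $u$. The only other agent is $j$, so after the swap agent $i$ sees exactly one neighbour iff $uw\in E$. This is the same count she had under $\alloc$. Hence $\util_i(\alloc)=\util_i(\alloc^{i\leftrightarrow j})$, whatever the utility function $\util_i$ is, and the same holds for $j$.

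Given this equality, the envy condition $\util_i(\alloc)<\alpha\cdot\util_i(\alloc^{i\leftrightarrow j})$ would require $\util_i(\alloc)<\util_i(\alloc)$ when $\alpha=1$. That is impossible, so neither agent envies the other, and the allocation is envy-free. Because utilities are non-negative, $\alpha\le 1$ gives $\alpha\,\util_i(\alloc)\le\util_i(\alloc)$, so the same argument also yields $\alpha$-envy-freeness for every $\alpha\in[0,1]$.

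There is no real obstacle here. The only point needing care is the observation that swapping the two agents leaves their mutual adjacency, and hence each agent's neighbourhood count, unchanged. This uses only that $G$ is undirected and that no third agent exists.
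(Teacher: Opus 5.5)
Your proposal is correct and follows essentially the paper's own argument: with two agents, swapping them preserves their mutual adjacency, so each agent's neighbour count (and hence utility) is unchanged by the swap, which rules out envy. Your remark that this also yields $\alpha$-envy-freeness for every $\alpha\in[0,1]$ is a valid small bonus.
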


On the other hand, though, as we show next, when there are three or more agents, envy-free allocations fail to exist even in the approximate sense, for any non-trivial approximation.

\begin{proposition}
    \label{thm:alpha-envy-free:Non-existence}
    For any~$\alpha\in(0,1]$, an~$\alpha$-envy-free allocation is not guaranteed to exist, even with three agents and identical utilities that are either binary, linearly increasing, or linearly decreasing.
\end{proposition}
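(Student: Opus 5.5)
The plan is to give explicit three-agent counterexamples. In each one the topology has exactly three vertices, so every allocation occupies all of them. The main tool is a simple swap identity for three agents.

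Let the agents be $i,j,k$. After the swap $\alloc^{i\leftrightarrow j}$, agent~$i$ sits at $\alloc(j)$. Its occupied neighbors there are $j$ (now at $\alloc(i)$) and possibly $k$. The adjacency between $\alloc(i)$ and $\alloc(j)$ is symmetric, so the number of occupied neighbors $i$ sees after the swap equals the number of occupied neighbors $j$ had before it. Hence, with identical utilities $\util$, agent~$i$ $\alpha$-envies $j$ exactly when $\util(d_i) < \alpha\cdot\util(d_j)$. Here $d_i$ and $d_j$ are the degrees of $\alloc(i)$ and $\alloc(j)$ in the subgraph induced by the occupied vertices. Because all three vertices are always occupied, this induced subgraph is $G$ itself, up to relabeling of the agents. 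It therefore suffices to choose $G$ and $\util$ so that some vertex has $\util(d_v)=0$ while another has $\util(d_w)>0$. Then $0<\alpha\cdot\util(d_w)$ holds for every $\alpha\in(0,1]$, so some agent $\alpha$-envies another in every allocation.

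For linearly increasing utilities, take $\util(x)=x$, i.e.\ $a=1$ and $b=0$, which is allowed since $b\in\N_0$. Let $G$ be a single edge plus an isolated vertex. The occupied degrees are $1,1,0$, so the isolated agent has utility $0$ and envies either endpoint agent, whose location gives utility $1$. For binary utilities, the same graph works with $\util(x)=1$ if $x\geq 1$ and $\util(0)=0$. For linearly decreasing utilities, take $\util(x)=2-x$, i.e.\ $a=1$ and $b=2=a(\numAgents-1)$, and let $G$ be the path on three vertices. The occupied degrees are $1,2,1$, so the middle agent has utility $0$ and envies an endpoint agent, whose location gives utility $1$.

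The only step needing care is the swap identity, specifically that the edge between $\alloc(i)$ and $\alloc(j)$ is counted correctly after the exchange. Once that is checked, each construction is verified immediately. The choices $b=0$ and $b=a(\numAgents-1)$ are what let the envious agent have utility exactly $0$, which makes the argument work for every $\alpha>0$ rather than only for $\alpha$ close to $1$.
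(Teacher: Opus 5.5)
Your proposal is correct and takes essentially the same route as the paper: explicit three-agent counterexamples in which some agent has utility exactly $0$ but would obtain positive utility at another agent's location, and your linearly decreasing instance (the path $P_3$ with $\util(x)=2-x$) is identical to the paper's. The differences are cosmetic. The paper uses two disjoint edges for linearly increasing utilities and the star with $\util(2)=1$ for binary ones, whereas you use a single edge plus an isolated vertex for both. Together with your swap identity (which is correct, and in fact holds for any number of agents because a swap does not change the set of occupied vertices), this makes every allocation the same up to relabeling.
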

\begin{proof}
    For each case, we exhibit a topology and a utility function under which no allocation is~$\alpha$-envy-free for any~$\alpha\in(0,1]$.

    \proofparagraph{Binary utilities.}
    Let the topology be a star with center~$c$ and two leaves~$\ell_1, \ell_2$, and let~$\agents = \{a_1, a_2, a_3\}$ with identical binary utility functions such that~$\util(0) = \util(1) = 0$,~$\util(2) = 1$. By the symmetry of the leaves, every allocation places one agent at~$c$ and the other two at the leaves. Assume without loss of generality that~$\alloc(a_3) = c$,~$\alloc(a_1) = \ell_1$,~$\alloc(a_2) = \ell_2$. Then,~$\util_{a_1}(\alloc) = \util(1) = 0$ and~$\util_{a_1}(\alloc^{a_1 \leftrightarrow a_3}) = \util(2) = 1$, so~$\util_{a_1}(\alloc) = 0 < \alpha \cdot 1$ for every~$\alpha \in (0, 1]$. Hence,~$a_1$~$\alpha$-envies~$a_3$ for any~$\alpha\in(0,1]$.

    \proofparagraph{Linearly decreasing utilities.}
    We use the same star topology as in the binary utilities case, but we replace the utilities with~$\util(x) = 2 - x$. In any allocation, one agent is at~$c$, and the other two are at the leaves; assume~$\alloc(a_3) = c$. Then~$\util_{a_3}(\alloc) = \util(2) = 0$ and~$\util_{a_3}(\alloc^{a_3 \leftrightarrow a_1}) = \util(1) = 1$, so~$a_3$~$\alpha$-envies~$a_1$ for every~$\alpha \in (0, 1]$.

    \proofparagraph{Linearly increasing utilities.}
    Let the topology be the disjoint union of two edges~$\{u_1,u_2\}$ and~$\{w_1, w_2\}$. We use the utility function~$\util(x) = x$, which is clearly linearly increasing. In any allocation of three agents on four vertices, exactly one edge contains two agents and the other contains one. Assume without loss of generality~$\alloc(a_1) = u_1$,~$\alloc(a_2) = u_2$,~$\alloc(a_3) = w_1$. Then~$\util_{a_3}(\alloc) = \util(0) = 0$. Since~$\util_{a_3}(\alloc^{a_1 \leftrightarrow a_3}) = \util_{a_3}(\alloc^{a_2 \leftrightarrow a_3}) = \util(1) = 1$,~$a_3$~$\alpha$-envies both~$a_1$ and~$a_2$ for every~$\alpha \in (0, 1]$.
\end{proof}

Trivially, if the factor~$\alpha$ from the previous proposition is equal to zero, which is the only missing case, then by the non-negativity of utilities, we obtain that any allocation is~$0$-envy-free.

\begin{observation}
    Any allocation is~$0$-envy-free.
\end{observation}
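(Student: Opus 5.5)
The claim follows directly from the definition of $\alpha$-envy together with non-negativity of utilities, so the plan is a one-line unfolding of definitions. I fix an arbitrary game $\Gamma$, an arbitrary allocation $\alloc$, and an ordered pair of distinct agents $i,j$. By definition, agent $i$ $0$-envies agent $j$ exactly when
\[
\util_i(\alloc) < 0\cdot \util_i(\alloc^{i\leftrightarrow j}) = 0 .
\]
The only thing left to show is that this strict inequality can never hold.

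For this I use the codomain of the utility functions fixed in the model: each $\util_i$ maps $[\numAgents-1]_0$ into $\R_0^+$. Since $\util_i(\alloc)$ is defined as $\util_i$ evaluated at the number of occupied neighbours of $\alloc(i)$, it is a value of $\util_i$ and hence satisfies $\util_i(\alloc)\geq 0$. The strict inequality $\util_i(\alloc)<0$ is therefore impossible.

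It follows that no agent $0$-envies any other agent. Because $\alloc$, $i$ and $j$ were arbitrary, every allocation is $0$-envy-free. There is no real obstacle here beyond stating that the range of $\util_i$ is $\R_0^+$.
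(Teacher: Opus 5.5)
Your proof is correct and follows the paper's argument: with $\alpha=0$ the right-hand side of the envy condition is $0$. Since every utility takes values in $\R_0^+$, the strict inequality $\util_i(\alloc)<0$ can never hold, so no agent $0$-envies another.
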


A similar non-existence result holds for~$\beta$-hurting swap stability, for every non-trivial value of the parameter~$\beta$.

\begin{proposition}\label{thm:beta-hurting:Non-existence}
    For every~$\beta\in(0,1)$, a~$\beta$-hurting swap stable allocation is not guaranteed to exist, even with three agents and identical utilities that are either increasing or decreasing.
\end{proposition}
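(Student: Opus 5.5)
The plan is to reuse the two topologies from the proof of \Cref{thm:alpha-envy-free:Non-existence}. The only change is to let the identical utility function depend on~$\beta$, so that the agent who is hurt by the swap loses exactly a~$(1-\beta)$ fraction of her utility. In each case, every allocation of the three agents looks the same up to symmetry, so it suffices to exhibit one~$\beta$-hurting swap deviation in that canonical allocation.

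\proofparagraph{Decreasing utilities.} Take the star with center~$c$ and leaves~$\ell_1,\ell_2$. Define~$\util(0)=2$,~$\util(1)=1$ and~$\util(2)=1-\beta$. Since~$\beta\in(0,1)$, we have~$2>1>1-\beta>0$, so~$\util$ is strictly decreasing and non-negative. Every allocation puts one agent, say~$a_3$, at~$c$ and the others at the leaves. Then~$a_3$ has utility~$\util(2)=1-\beta$ and would get~$\util(1)=1$ after swapping with~$a_1$, a strict improvement. After the swap,~$a_1$ sits at the center and gets~$\util(2)=1-\beta=(1-\beta)\cdot\util(1)$, which is exactly her allowed loss. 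So~$a_3$ has a~$\beta$-hurting swap deviation with~$a_1$ in every allocation.

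\proofparagraph{Increasing utilities.} Take the disjoint union of the edges~$u_1u_2$ and~$w_1w_2$. Define~$\util(0)=1-\beta$,~$\util(1)=1$ and~$\util(2)=2$. The value~$\util(2)$ never occurs here, but it keeps the function increasing on~$[\numAgents-1]_0$. Every allocation places two agents on one edge and one agent, say~$a_3$, alone on the other edge. Swapping~$a_3$ with~$a_1$ raises~$a_3$'s utility from~$1-\beta$ to~$1$. Meanwhile~$a_1$ drops from~$1$ to~$1-\beta=(1-\beta)\cdot\util(1)$, which is permitted. Hence no allocation is~$\beta$-hurting swap stable.

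There is no real obstacle. The one point needing care is tuning the hurt agent's loss to exactly the factor~$1-\beta$ while keeping the function strictly monotone and non-negative. This works precisely because~$\beta<1$, which makes~$1-\beta$ positive, and~$\beta>0$, which makes~$1-\beta<1$ strictly. This also explains why the endpoints are excluded. At~$\beta=0$ the notion coincides with weak swap stability, so \Cref{thm:swap_stable} guarantees existence. At~$\beta=1$ it coincides with envy-freeness, which is covered by \Cref{thm:alpha-envy-free:Non-existence}.
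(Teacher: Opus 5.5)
Your proof is correct and follows the same approach as the paper. Both use a small fixed topology in which every allocation of the three agents is the same up to symmetry, with identical utilities tuned so that a worse-off agent strictly gains by swapping with a better-off agent whose utility drops by at most a factor $1-\beta$. The only difference is the choice of instances: the paper uses the star with $\util(x)=2(1-\beta)+\beta x$ for increasing utilities (where the center's drop from $2$ to $2-\beta$ even has slack) and its complement with $\util(x)=2-\beta x$ for decreasing ones, whereas you reuse the topologies of \Cref{thm:alpha-envy-free:Non-existence} with utilities that make the hurt agent's loss exactly the factor $1-\beta$.
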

\begin{proof}
    Fix~$\beta\in(0,1)$. For increasing utilities, let the topology be a star with center~$c$ and two leaves~$\ell_1, \ell_2$, and let three agents have the identical increasing utility function~$\util(x) = 2(1-\beta) + \beta x$. In every allocation, some agent~$i$ occupies the center and has both other agents as neighbors, so~$\util_i(\alloc) = \util(2) = 2$, while each of the other two agents has exactly one occupied neighbor and utility~$\util(1) = 2 - \beta$. Let~$j$ be an agent on a leaf. By swapping with~$i$, agent~$j$ obtains utility~$\util(2) = 2 > 2 - \beta = \util_j(\alloc)$, so she strictly improves, while agent~$i$ ends up with a single occupied neighbor and utility~$\util(1) = 2 - \beta \geq 2(1-\beta) = (1-\beta)\cdot\util_i(\alloc)$. Hence,~$j$ can perform a~$\beta$-hurting swap deviation with~$i$, and no allocation is~$\beta$-hurting swap stable.

    For decreasing utilities, we take the complement of the topology --- a single edge~$\ell_1\ell_2$ together with an isolated vertex~$c$ --- and the utility function~$\util'(x) = \util(2-x) = 2 - \beta x$. In every allocation, the agent~$i$ on~$c$ has no occupied neighbors and utility~$\util'(0) = 2$, while the other two agents have one occupied neighbor each and utility~$\util'(1) = 2 - \beta$. As before, either of the latter two agents strictly improves by swapping with~$i$, whose utility after the swap is~$\util'(1) = 2 - \beta \geq (1-\beta)\cdot\util'(0)$; hence, no allocation is~$\beta$-hurting swap stable.
\end{proof}

Next, we show that the dichotomy between (weak) swap-stability and (relaxations) of envy-freeness extends beyond the landscape of existence and also includes computational aspects of these notions. Specifically, in sharp contrast to weakly swap stable solutions, deciding whether an instance admits an~$\alpha$-envy-free allocation for some~$\alpha\in(0,1]$, or a~$\beta$-hurting swap stable allocation for some~$\beta\in(0,1)$ is \NPc, even for the most restricted models of preferences.

\begin{theorem}
    \label{thm:alpha-envy-free:Hardness}
    It is \NPc to decide whether an instance~$\mathcal{I}$ admits at least one envy-free allocation, even if the utilities are (i) binary, (ii) linearly increasing, or (iii) linearly decreasing.
    The same reduction also shows \NPcness for~$\alpha$-envy-freeness,~$\alpha\in(0,1]$, and~$\beta$-hurting swap stability,~$\beta\in(0,1)$, but only if the preferences are increasing, decreasing, or, in the case of~$\alpha$-envy-freeness, binary.
\end{theorem}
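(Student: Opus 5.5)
Membership in \NP is immediate: given an allocation~$\alloc$, we compute every agent's neighborhood count and check all~$\Oh{\numAgents^2}$ ordered pairs~$(i,j)$ for envy, $\alpha$-envy, or a $\beta$-hurting swap deviation. For hardness, I would reduce from a problem that asks for a subgraph with very regular degree structure. The natural candidate is \probName{Clique}, or one of its regular-subgraph relatives. The reason is that envy-freeness with identical utilities forces all agents to obtain the same utility value, or at least to be unable to gain by moving to any occupied position. The three small gadgets in \Cref{thm:alpha-envy-free:Non-existence} already show the obstructions: a star center, an isolated agent, and an unbalanced component. The plan is to attach such gadgets to an input graph~$H$ so that only ``balanced'' placements survive.

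Concretely, for binary utilities I would take~$H$ with target~$k$, set~$\numAgents = k$ and~$G = H$, and give each agent~$\util(x)=1$ iff~$x = k-1$. If~$H$ has a~$k$-clique, placing the agents on it gives everyone utility~$1$, so no envy occurs. Conversely, suppose~$\alloc$ is envy-free and some agent~$i$ has fewer than~$k-1$ occupied neighbors. Then~$i$ has utility~$0$, and she must also get~$0$ at every other agent's location. Note that after a swap the count at $\alloc(j)$ seen by~$i$ is~$|N(\alloc(j))\cap \alloc(\agents)|$, minus one if~$\alloc(i)$ is adjacent to it. The key claim is therefore that if not everyone has~$k-1$ neighbors, then either everyone has utility~$0$, or some agent has utility~$0$ while another has~$1$. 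In the second case, the zero agent swapping with the happy agent~$j$ (whose vertex sees all others) obtains~$k-1$ unless she is adjacent to~$\alloc(j)$; but~$j$ sees all, so she is adjacent, and the count drops to~$k-2$. This is exactly the subtlety, and it breaks the naive argument.

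To repair this, I would add a padding gadget that destroys the all-zero equilibrium: for example, a disjoint star or $K_{1,2}$-type gadget as in \Cref{thm:alpha-envy-free:Non-existence}, plus extra agents forced to occupy it. I would then choose utilities so that any agent not in a clique-like configuration envies someone. Because envy is a strict inequality against a~$0$ utility, the same construction yields~$\alpha$-envy for every~$\alpha>0$, which gives the binary $\alpha$ case. For the linear cases I would use~$\util(x)=x$ and~$\util(x)=(\numAgents-1)-x$. With increasing utilities, envy-freeness means every agent's own count is at least what she would see from any other occupied vertex after the swap. Taking a~$d$-regular input graph (\probName{Clique} stays \NPh on regular graphs) and~$k$ agents, the condition becomes that the occupied set induces a regular subgraph with no occupied vertex ``richer'' than others. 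Isolated padding edges, as in the disjoint-edges example, then punish non-clique placements. The decreasing case follows by complementing the topology, exactly as \Cref{thm:beta-hurting:Non-existence} derives its decreasing example from the increasing one, using~$u'(x)=u(\numAgents-1-x)$.

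For~$\beta$-hurting stability and general~$\alpha$ with increasing or decreasing utilities, I would replace~$\util(x)=x$ by a steep function such as~$\util(x)=M^x$ with~$M$ large depending on~$\alpha,\beta$. This way any strict improvement is by a factor exceeding~$1/\alpha$, so envy implies~$\alpha$-envy. The swapped partner then loses at most a factor~$1/M$, which is~$\geq 1-\beta$ unless she gains, so envy implies a $\beta$-hurting deviation. Conversely, a $\beta$-hurting deviation already gives envy, so that direction is trivial. The main obstacle is the soundness direction of the base reduction. Envy-freeness under identical utilities is weaker than ``everyone is in a clique'', because of the adjacency correction after swaps and because of degenerate equilibria such as all agents isolated. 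I expect most of the work to go into designing the padding gadget and the precise utility thresholds so that every envy-free allocation provably induces a~$k$-clique in the~$H$ part.
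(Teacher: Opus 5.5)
Your proposal does not establish the soundness direction, and part of what you identify as the main obstacle comes from misreading the model. A swap does not change the set of occupied vertices: $j$ moves into $\pi(i)$ as $i$ leaves it. So every vertex keeps its number $n_v$ of occupied neighbours, and $i$ simply inherits $j$'s old count. The ``minus one if $\pi(i)$ is adjacent'' correction is what happens when $i$ \emph{jumps} to an empty vertex, not when she swaps. Hence, with identical utilities, an allocation is envy-free if and only if $u(n_v)$ is the same for every occupied vertex $v$. In your naive reduction ($G=H$, $k$ agents, $u(x)=1$ iff $x=k-1$), the mixed case is therefore already fine: an unhappy agent who swaps with a happy one obtains exactly $k-1$ occupied neighbours and envies her. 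The only genuine obstruction is the degenerate case in which nobody attains the top value. An example is placing the agents on an independent set, which is envy-free both for the threshold utility and for $u(x)=x$.

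The missing idea is a device guaranteeing that in \emph{every} allocation some occupied vertex sees all $n-1$ other agents. The paper achieves this with a set $W$ of universal vertices and more agents than non-universal vertices. It uses the standard $3$SAT clause--literal graph on $3m$ vertices, $|W|=2m+1$, and $n=3m+1$ agents; in your \textsc{Clique} language this is $|W|=|V(H)|-k+1$ and $n=|V(H)|+1$. By pigeonhole some agent sits on $W$ with count $n-1$, so every agent with a smaller count envies her. The occupied set must therefore be a clique of size $n$, and since at most $|W|$ of its vertices lie in $W$, at least $k$ of them form a clique in $H$.

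Your suggested disjoint star or $K_{1,2}$ gadget cannot play this role. An agent in a separate component is never adjacent to all the others, so such a gadget cannot supply the coveted position, and you do not say what would force agents into it. Likewise, nothing in the ``regular input graph plus isolated padding edges'' plan rules out placing all agents on an independent set, which is envy-free for $u(x)=x$.

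Finally, your treatment of $\beta$-hurting stability goes the wrong way. With $u(x)=M^x$ and $x_i<x_j$, the partner's utility ratio after the swap is at most $1/M<1-\beta$, so steepness \emph{prevents} $\beta$-hurting deviations. You need a flat function with $\min u\ge(1-\beta)\max u$, such as the paper's $u(x)=3m(1-\beta)+\beta x$, so that envying the agent on $W$ is automatically a $\beta$-hurting deviation. Steepness (or the paper's single boosted value $u(3m)=3m/\alpha$) is appropriate only for $\alpha$-envy-freeness.
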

\begin{proof}
    All problem variants are contained in \NP: given an allocation~$\alloc$ as a certificate, one can verify in polynomial time that no agent ($\alpha$-)envies another agent, or that no pair of agents admits a~$\beta$-hurting swap deviation, by inspecting all ordered pairs of agents. In the remainder of the proof, we show hardness.
    We will prove the theorem via a reduction from the \NPc \probName{3SAT} problem~\citep{Karp1972}. We will explain the reduction for increasing functions; all results for decreasing functions can be obtained analogously by taking the complement of the graph.  
    An instance of \probName{3SAT} consists of a Boolean formula in CNF with~$n$ variables and~$m$ clauses, where each clause contains three literals. 
    We will construct an instance~$\Gamma = (G,N,(u_i)_{i\in N})$ of our problem as follows.
    For each clause~$C$ and each literal~$x$ of the clause, we add to~$G$ a vertex~$v_{C,x}$ and we add the edges~$v_{C_1,x}v_{C_2,y}$, if~$y \neq \bar{x}$ and~$C_1\neq C_2$.
    Moreover, we add to~$G$ a set~$W$ of~$2m+1$ vertices such that for all~$w\in W$ and all~$v\in V(G)\setminus\{w\}$ the edge~$wv$ is in~$E(G)$.    
    The set~$N$ contains~$3m+1$ agents with identical utility functions:
    \begin{itemize}
        \item for binary utility function (this is the same for ~$\alpha$-envy-freeness), each agent has utility 1, if she has~$3m$ occupied neighboring vertices;
        \item for linearly increasing utilities, we set~$u_i(x) = x$; that is, the utility of an agent equals the number of occupied neighboring vertices;
        \item for~$\alpha$-envy-freeness and increasing utility, we set~$u_i(3m) = \frac{3m}{\alpha}$ and~$u_i(x) = x$ otherwise;
        \item for~$\beta$-hurting swap stability and increasing utility, we set~$u_i(x) = 3m(1-\beta) + x\beta$.
    \end{itemize}
    We claim that there is a solution to our problem if and only if the \probName{3SAT} instance is satisfiable. Firstly, assume that the Boolean is satisfiable. To find an envy-free allocation, we place~$2m+1$ agents on~$W$. Without loss of generality, let the remaining agents be~$1, 2, \ldots, m$ and let the clauses be~$C_1, C_2, \ldots, C_m$. Let~$x_i$ be an arbitrary satisfied literal in~$C_i$. We place the agent~$i$ on a vertex~$v_{C_i, x_i}$. Note that a satisfying assignment cannot have the literal and its negation satisfied at the same time. So the agents are allocated on a clique with~$3m+1$ vertices. Hence, the allocation is envy-free and therefore also~$\alpha$-envy-free and~$\beta$-hurting swap stable.

    In the other direction, notice that we have~$3m+1$ agents but only~$3m$ vertices associated with a literal of the \probName{3SAT} instance. Hence, at least one agent is allocated to~$W$ and so has all~$3m$ other agents in the neighborhood. 
    Let~$i$ be an arbitrary agent allocated on a vertex in~$W$. Let~$j$ be an arbitrary other agent. We first show that all agents in~$N\setminus \{j\}$ have to be allocated to the neighborhood of~$j$. 
    In the case of binary utility, this is trivial as~$u_j(3m) = 1$ and~$u_j(x) = 0$ otherwise, so~$j$ would ($\alpha$-)envy~$i$ if they did not have all agents in their neighborhood. Similarly, for the linearly increasing envy-freeness case,~$j$ would envy~$i$, if they do not get all agents in their neighborhood. 
    For the case of~$\alpha$-envy-free and increasing utility, notice that by swapping with~$i$, agent~$j$ would obtain utility~$u_j(3m) = \frac{3m}{\alpha}$, so~$j$~$\alpha$-envies~$i$ whenever~$u_j(\alloc) < \alpha\cdot\frac{3m}{\alpha} = 3m$, which is the case unless~$j$ has all~$3m$ other agents in her neighborhood. Finally, for~$\beta$-hurting swap stability, notice that~$u_i(3m) = 3m$ and~$u_i(x) \geq 3m(1-\beta)$ for every~$x$, with equality if and only if~$x = 0$. Hence, if some agent~$j$ had fewer than~$3m$ occupied neighboring vertices, then~$u_j(\alloc) < 3m$, so~$j$ would strictly improve by swapping with~$i$, while the utility of~$i$ after the swap would be~$u_i(x) \geq 3m(1-\beta) = (1-\beta)\cdot u_i(\alloc)$ regardless of the number~$x$ of occupied vertices neighboring her new location. That is,~$j$ could perform a~$\beta$-hurting swap deviation with~$i$, and the allocation would not be stable. 
    It follows that the agents have to be allocated to a clique. Since for each clause~$C=\{x,y,z\}$, the vertices~$v_{C,x}, v_{C,y}, v_{C,z}$ are independent, there has to be exactly one agent allocated to each clause, and the remaining~$2m+1$ agents are allocated to~$W$. Moreover, a pair of agents cannot be allocated to a literal and its negation in two different clauses, as the associated vertices would not have an edge between them. Therefore, for each clause, we can assign the literal with the agent true to obtain a satisfying assignment for the original \probName{3SAT} instance.

    To obtain the results for decreasing functions, we can reduce directly from the analogous case for an increasing function by taking the complement of the graph~$G$ and letting~$u_i'(x) = u_i(3m-x)$. The correctness follows straightforwardly from the fact that if an agent has~$x$ agents in the neighborhood in~$G$, it has~$3m-x$ agents in the neighborhood in the complement of~$G$.
\end{proof}

The following remark explains why binary utilities are excluded for~$\beta$-hurting swap stability in \Cref{thm:alpha-envy-free:Hardness}: in this case, the notion collapses to weak swap stability, and no hardness result is possible.

\begin{remark}
    For binary utilities and any~$\beta \in (0,1)$, a~$\beta$-hurting swap stable allocation always exists and can be found in polynomial time. Indeed, if~$\util_j(\alloc) = 0$, then the condition~$\util_j(\alloc^{i\leftrightarrow j}) \geq (1-\beta)\cdot\util_j(\alloc)$ is trivially satisfied, and if~$\util_j(\alloc) = 1$, then it is satisfied if and only if~$\util_j(\alloc^{i\leftrightarrow j}) = 1$, since~$1-\beta \in (0,1)$ and utilities take values in~$\{0,1\}$. In both cases, the condition is equivalent to~$\util_j(\alloc^{i\leftrightarrow j}) \geq \util_j(\alloc)$. Hence, under binary utilities,~$\beta$-hurting swap deviations coincide with weak swap deviations, and the claim follows from \Cref{thm:swap_stable}.
\end{remark}

We conclude the section with one more positive result. Specifically, we show that the computational hardness of the problem crucially stems from a strategic decision of how to allocate a large number of agents. It turns out that, when the number of agents is small, deciding the existence becomes tractable.

\begin{theorem} \label{thm:alpha-envy-free:FPT}
    Deciding whether an~$\alpha$-envy-free,~$\alpha\in(0,1]$, or a~$\beta$-hurting swap stable,~$\beta\in[0,1)$, allocation exists is fixed-parameter tractable when parameterized by the number of agents~$\numAgents$.
\end{theorem}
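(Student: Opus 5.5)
Whether an allocation is $\alpha$-envy-free or $\beta$-hurting swap stable depends only on a bounded amount of data: the numbers of occupied neighbours of the $\numAgents$ occupied vertices before and after each swap. The plan is to guess this data, whose size depends only on $\numAgents$, and then decide in polynomial time (or FPT time) whether the topology $G$ realizes it.

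\textbf{Step 1: the pattern.} An allocation $\alloc$ induces the labelled graph $H=G[\alloc(\agents)]$ on vertex set $\agents$, with $ij\in E(H)$ iff $\alloc(i)\alloc(j)\in E(G)$. For every agent $i$, the number of occupied neighbours of $\alloc(i)$ is $\deg_H(i)$. After swapping $i$ and $j$, agent $i$ has $\deg_H(j)-[ij\in E(H)]$ occupied neighbours, because the vertex $\alloc(i)$ is now taken by $j$. Symmetrically, agent $j$ has $\deg_H(i)-[ij\in E(H)]$. Hence both stability conditions are determined by $H$ together with the utility functions, and can be checked for all ordered pairs in $\Oh{\numAgents^2}$ time.

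\textbf{Step 2: the algorithm.} We enumerate all $2^{\binom{\numAgents}{2}}$ graphs $H$ on the agent set and keep those that are $\alpha$-envy-free, respectively $\beta$-hurting swap stable, under the formulas of Step 1. For each surviving $H$ we must decide whether $G$ contains an \emph{induced} copy of $H$. Only the induced structure matters, since occupied-neighbour counts ignore empty vertices. The answer is yes iff some surviving $H$ embeds in $G$, and any such embedding directly gives the desired allocation.

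\textbf{Step 3: the main obstacle.} Induced subgraph isomorphism parameterized by $|V(H)|$ is \W-hard in general; for instance, $H$ a clique is exactly \probName{Clique}. So we cannot treat each $H$ as a black box. We have to exploit the fact that the family of stable patterns is special. Take an agent $i$ of maximum degree $\Delta$ in $H$. If some agent $j$ has $\deg_H(j)<\Delta$ and $ij\notin E(H)$, then by swapping with $i$ agent $j$ gets $\Delta$ occupied neighbours, strictly more than before. This suggests that stable patterns are tightly constrained, for example close to regular or made of cliques, but only for increasing utilities. For arbitrary utilities the constraints are much weaker.

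The first thing I would try is to show that for each surviving $H$ one may replace it by a pattern with the same degree data that can be found by colour-coding or a bounded search. If that fails, the fallback is to question whether FPT is really attainable. The \W-hardness shown later for welfare (\Cref{thm:USW:NPc}) suggests an XP algorithm might be the honest bound. To avoid claiming more than I can prove, I would first try to confirm that the embedding step can be reduced to a problem solvable in FPT time, such as finding a bounded structure guided by degree sequences, and I would flag this step as the crux of the proof.
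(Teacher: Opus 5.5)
The gap is at exactly the step you flag as the crux. You never give an \FPT procedure for deciding whether $G$ contains an induced copy of a surviving pattern, and your fallback suggestion that only \XP may be attainable is wrong.

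The missing idea is that you never need to embed an arbitrary pattern. Two patterns survive your filter for \emph{every} utility profile, namely $K_{\numAgents}$ and $\overline{K_{\numAgents}}$. In both, every agent has the same number of occupied neighbours ($\numAgents-1$, resp.\ $0$). Because a swap does not change the set of occupied vertices, this stays true after any swap. Thus $\util_i(\alloc)=\util_i(\alloc^{i\leftrightarrow j})$ for all $i,j$, so there is no $\alpha$-envy for any $\alpha\le 1$. Also, no agent ever strictly improves, so there is no $\beta$-hurting swap deviation.

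By Ramsey's theorem with the Erd\H{o}s--Szekeres bound $R(\numAgents,\numAgents)\le 4^{\numAgents}$, if $|V(G)|\ge 4^{\numAgents}$ then $G$ has a clique or an independent set of size $\numAgents$. The usual greedy argument finds it in polynomial time, and the answer is always yes. Otherwise $|V(G)|<4^{\numAgents}$, and brute force over all $|V(G)|^{\numAgents}<4^{\numAgents^2}$ allocations, each checked in $\Oh{\numAgents^2}$ time, is already \FPT. This is the paper's proof.

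The analogy with \Cref{thm:USW:NPc} and with the \Whness of induced subgraph isomorphism is misleading. There a \emph{specific} pattern (a clique) is required. Here either of two homogeneous patterns suffices, and one of them is guaranteed in every sufficiently large graph.

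Independently, your Step~1 formula is wrong. Swapping $i$ and $j$ leaves $\alloc(\agents)$ unchanged; the vertex $\alloc(i)$ is now occupied by $j$ and still counts. So after the swap, agent $i$ has exactly $\deg_H(j)$ occupied neighbours, with no correction term $-[ij\in E(H)]$. With your formula the filter misclassifies patterns. Take $H=P_3$ and the binary utility with $\util(2)=1$ and $\util(0)=\util(1)=0$, which is the star instance of \Cref{thm:alpha-envy-free:Non-existence}. Your formula predicts that a leaf agent swapping with the centre gets $2-1=1$ neighbour, and hence declares $P_3$ envy-free. In fact the leaf gets $2$ neighbours and envies the centre.
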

\begin{proof}
    Let~$R(\numAgents, \numAgents)$ denote the Ramsey number: the smallest~$\nu$ such that every graph on~$\nu$ vertices contains either a clique of size~$\numAgents$ or an independent set of size~$\numAgents$. By the Erdős-Szekeres bound~\citep{ErdosS1935},~$R(\numAgents, \numAgents) \leq 4^\numAgents$, and a homogeneous set of size~$\numAgents$ can be found in time polynomial in~$|V(G)|$, whenever~$|V(G)| \geq R(\numAgents, \numAgents)$. We distinguish two cases based on~$|V(G)|$.

    First, let~$|V(G)| \geq R(\numAgents, \numAgents)$. Then, by the previous arguments, we can find either a clique~$K$ or an independent set~$I$ of size~$\numAgents$ in~$G$. Let~$\alloc$ be any bijective allocation of the agents onto the vertices of~$K$ or~$I$. In either case, every agent has the same number of allocated neighbors as every other agent in~$\alloc$ and in every allocation~$\alloc^{i \leftrightarrow j}$ obtained by swapping a pair of agents. Hence,~$\util_i(\alloc) = \util_i(\alloc^{i \leftrightarrow j})$ for every~$i,j \in \agents$, so the allocation is~$1$-envy-free and~$0$-hurting swap stable.

    Now, let~$|V(G)| < R(\numAgents, \numAgents)$. Since~$R(\numAgents,\numAgents) < 4^\numAgents$, the number of allocations is at most~$|V(G)|^\numAgents < 4^{\numAgents^2}$. We enumerate all allocations and, for each of them, check whether it is~$\alpha$-envy-free (respectively~$\beta$-hurting swap stable) by examining all~$\Oh{\numAgents^2}$ ordered pairs of agents in polynomial time. The total running time is~$4^{\numAgents^2} \cdot |V(G)|^\Oh{1}$, which is \FPT with respect to the number of agents~$\numAgents$.
\end{proof}

\section{Jump-Based Deviations}\label{sec:jumps}

We now switch to jump-stable allocations; recall that, in such an allocation, no agent has incentive to deviate to an empty vertex. Unfortunately, in contrast to swap-stable allocations, jump-stable allocations do not necessarily always exist. 

\begin{figure}[bt!]
    \centering
    \begin{tikzpicture}
        \node (jump) at (0,-1.25) {jump stable};
        \node (lJump) at (0,-2.5) {$\ell$-local jump stable};
        \node (twoJump) at (0,-3.75) {$2$-local jump stable};
        \node (oneJump) at (0,-5) {$1$-local jump stable};
        \draw[->] (jump) -- (lJump);
        \draw[->] (lJump) -- (twoJump);
        \draw[->] (twoJump) -- (oneJump);
        \begin{pgfonlayer}{background}
            \node[text width = 2.25cm,align=center] (mon) at (-4.25,-1.25) {non-increasing non-decreasing};
            \draw[->] (mon.south) to[in=180,out=270] (-3,-4);

            \node[text width = 2.25cm,align=center] (bsp) at (4.25,-1.25) {binary \mbox{single-peaked}};
            \draw[->] (bsp.south) to[in=0,out=270] (3,-2.5);
            \draw[->] (bsp.south) to[in=0,out=270] (3,-5);

            \draw[cbGood, very thick, fill=cbGood, fill opacity=0.15, rounded corners]
                  ($(jump.west)+(-2,0.75)$) rectangle ($(oneJump.south east)+(1,-0.25)$);

            \draw[cbHard, very thick, densely dashed, fill=cbHard, fill opacity=0.45, rounded corners]
                  ($(jump.north west)+(-1.5,0.25)$) rectangle ($(twoJump.south east)+(1.5,-0.25)$);

            \draw[cbOpen, very thick, dotted, fill=cbOpen, fill opacity=0.30, rounded corners]
                  ($(oneJump.north west)-(1,-0.15)$) rectangle ($(oneJump.south east)+(1.5,-0.15)$);
        \end{pgfonlayer}
    \end{tikzpicture}
    \caption{An overview of our solution concepts based on the jumps of single agents,
    together with the corresponding existence and complexity results. The three shaded,
    overlapping regions indicate: existence guaranteed and polynomial-time solvable
    (green, solid outline); non-existence and \NPc{} (red, dashed outline); and
    non-existence with complexity status open (yellow, dotted outline). For single-peaked preferences, only non-existence is known.}
    \label{fig:swap:results}
\end{figure}

\begin{proposition}\label{thm:jump:Non-existence}
    Jump-stable allocations may fail to exist, even if there are two agents, the topology is a simple path, and the utilities are single-peaked or binary.
\end{proposition}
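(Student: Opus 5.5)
The plan is to give a single explicit instance that works for both utility classes at once. Take two agents $a_1,a_2$ on the path $P_4$ with vertices $v_1v_2v_3v_4$. Since $\numAgents=2$, each utility function is defined only on $\{0,1\}$. Set $\util_{a_1}(0)=0$, $\util_{a_1}(1)=1$, so $a_1$ wants a neighbor, and $\util_{a_2}(0)=1$, $\util_{a_2}(1)=0$, so $a_2$ wants to be alone. Both functions are binary. They are also single-peaked: $\util_{a_1}$ is strictly increasing with peak $z^*=1$, and $\util_{a_2}$ is strictly decreasing with peak $z^*=0$. So one instance covers both claims.

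We then show that every allocation admits a profitable jump, by splitting on whether the two agents are adjacent.

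\emph{Agents adjacent.} Here $a_2$ has utility $0$. We need an empty vertex not adjacent to $\alloc(a_1)$, because jumping there gives $a_2$ utility $1$. Up to the symmetry of the path, $\alloc(a_1)$ is an endpoint $v_1$ or an inner vertex $v_2$.
\begin{itemize}
\item If $\alloc(a_1)=v_1$, then $a_2$ sits on $v_2$, and $v_3$ is empty and not adjacent to $v_1$.
\item If $\alloc(a_1)=v_2$, then $a_2$ sits on $v_1$ or $v_3$, and $v_4$ is empty and not adjacent to $v_2$.
\end{itemize}

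\emph{Agents not adjacent.} Here $a_1$ has utility $0$. The vertex $\alloc(a_2)$ has at least one neighbor on the path. None of its neighbors hosts $a_1$, by non-adjacency. So $a_1$ can jump to such a neighbor and obtain utility $1$.

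In both cases some agent strictly improves by a jump, so no allocation is jump stable. The only step needing care is the adjacent case, where the path must be long enough that an empty vertex non-adjacent to $a_1$ always exists. This is why we use $P_4$ rather than $P_3$: on $P_3$, with $a_1$ at the center, every vertex is adjacent to $a_1$.
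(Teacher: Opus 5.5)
Your proof is correct and follows essentially the same approach as the paper: two agents on a path, one wanting a neighbor ($\util(0)=0,\util(1)=1$) and one wanting isolation ($\util(0)=1,\util(1)=0$), with both functions binary and single-peaked. The differences are cosmetic. The paper uses a path on five vertices with the agent roles swapped and a one-line argument, while you use $P_4$, which you rightly note is the shortest path that works, and you explicitly verify that the needed empty vertex always exists.
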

\begin{proof}
    Consider a game with a line graph~$G$ consisting of~$5$ vertices, and two agents. The utility function of agent~$1$ is~$u_1(0)=1$ and~$u_1(1)=0$, whereas the utility function of agent~$2$ is~$u_2(0)=0$ and~$u_2(1)=1$. To maximize her utility, agent~$2$ aims to be a neighbor of agent~$1$. In contrast, agent~$1$ aims to have no neighbors. Hence, in any possible allocation, at least one agent has an incentive to jump (either agent~$2$ jumps to be a neighbor of agent~$1$, or agent~$1$ jumps to not be a neighbor of agent~$2$).
\end{proof}

On the positive side, when the utility functions are monotone, a jump-stable allocation exists and can be computed efficiently via a natural dynamics.

\begin{theorem}\label{thm:jump:monotone}
    If the utility functions are all monotone of the same type (increasing or decreasing), a jump stable allocation is guaranteed to exist and can be found in polynomial time.
\end{theorem}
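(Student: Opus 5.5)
We use the improving-jump dynamics: start from an arbitrary allocation and, as long as some agent has a jump deviation to an empty vertex, let her perform it. The potential is the number of edges of $G$ with both endpoints occupied, that is,
\[
\Phi(\alloc) \coloneq \bigl|\{\, uv\in E \mid u,v\in \alloc(\agents)\,\}\bigr|.
\]
Note that $\Phi$ is a purely graph-theoretic quantity and does not depend on agent identities. This mirrors the proof of \Cref{thm:swap_stable}, where the anonymity of preferences is also what makes the potential argument work.

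First, consider all utility functions non-decreasing. Suppose agent $i$ jumps from $v=\alloc(i)$ to an empty vertex $w$ and strictly improves. Let $x$ be the number of occupied neighbours of $v$ in $\alloc$, and let $y$ be the number of occupied neighbours of $w$ in $\alloc^{i\mapsto v}$ (this count automatically excludes $v$, which is now empty). Since $\util_i$ is non-decreasing and $\util_i(y)>\util_i(x)$, we must have $y>x$; otherwise $\util_i(y)\le \util_i(x)$. Removing $i$ from $v$ destroys exactly $x$ occupied edges, and placing her on $w$ creates exactly $y$. This holds even when $v$ and $w$ are adjacent, because the edge $vw$ is counted in $x$ and not in $y$. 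Hence $\Phi$ increases by $y-x\ge 1$.

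Since $0\le\Phi\le\min\{|E|,\binom{\numAgents}{2}\}$, the dynamics stops after at most $\binom{\numAgents}{2}$ jumps, and the final allocation is jump stable by construction. This proves existence.

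For polynomial running time, each step checks every agent against every empty vertex, at a polynomial cost of $\Oh{\numAgents\cdot|V|\cdot|V|}$ per step. The total is therefore polynomial.

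The non-increasing case is symmetric. A strictly improving jump now forces $y<x$, so $\Phi$ strictly decreases. Since $\Phi$ is bounded below by $0$ and starts at most $\binom{\numAgents}{2}$, the same bound on the number of steps follows. (Alternatively, one could pass to the complement graph as in \Cref{thm:alpha-envy-free:Hardness}, but the direct argument is cleaner, because jumps change the occupied set.)

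The only delicate step is the accounting when $v$ and $w$ are adjacent, together with the observation that strict improvement under a merely monotone (not strictly monotone) function still forces a strict change in the neighbour count. Both are handled above. The assumption that all agents share the same monotonicity type is exactly what guarantees that every improving jump moves $\Phi$ in the same direction. With mixed types, \Cref{thm:jump:Non-existence} already shows that this fails.
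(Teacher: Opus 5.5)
Your proof is correct and is essentially the paper's argument: the paper uses the potential $\Phi(\alloc)=\sum_{z\in O(\alloc)} n_z(\alloc)$, which is exactly twice your count of occupied edges. It shows that an improving jump changes this potential by $2\bigl(n_{v'}(\alloc')-n_v(\alloc)\bigr)\neq 0$ in the direction fixed by the common monotonicity type, and it bounds the number of jumps by $\numAgents(\numAgents-1)/2$, as you do. Two small slips: the target allocation should be $\alloc^{i\mapsto w}$ rather than $\alloc^{i\mapsto v}$, and when $v$ and $w$ are adjacent the edge $vw$ is counted in neither $x$ nor $y$ (since $w$ is empty before the jump and $v$ after); your conclusion $\Phi(\alloc^{i\mapsto w})-\Phi(\alloc)=y-x$ is unaffected.
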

\begin{proof}
    For any allocation~$\alloc$, let~$O(\alloc) = \{z \in V \mid \exists j \in \agents\colon \alloc(j)=z \}$ be the set of non-empty vertices. In addition, for any vertex~$z \in V$, let~$n_z(\alloc) = |\{j \in \agents \mid \alloc(j) \in N_G(z)\}|$ be the number of agents that occupy vertices that are neighboring~$z$. Hence, if an agent~$i$ occupies~$v$ in~$\alloc$, she obtains utility~$u_i(\alloc) = u_i(n_v(\alloc))$. 
    We claim that the function 
    \begin{equation*}
        \Phi(\alloc) = \sum_{z \in O(\alloc)} n_z(\alloc),
    \end{equation*}
    or its negation~$-\Phi$, is an ordinal potential function. Intuitively,~$\Phi(\alloc)$ equals twice the number of edges of~$G$ with both endpoints occupied: an improving jump of an agent with a non-decreasing utility function strictly increases the number of agents neighboring her location and, by symmetry, also this number of edges, while under non-increasing utilities the opposite holds.
    Consider two allocations~$\alloc$ and~$\alloc'$ that only differ in the vertex that some agent~$i$ occupies; in particular,~$i$ occupies~$v$ in~$\alloc$ and~$v' \neq v$ in~$\alloc'$, while all other agents occupy the same vertices in the two allocations. 
    Observe that~$v'$ is empty in~$\alloc$, and~$v$ is empty in~$\alloc'$. 
    Let~$O$ be the set of non-empty vertices different than~$v$ and~$v'$ in the two allocations; clearly, these vertices are the same among the two allocations, that is,~$O = O(\alloc) \setminus \{v\} = O(\alloc') \setminus \{v'\}$.
    For an arbitrary vertex~$z \in O$, we have:
    \begin{itemize}
        \item If~$z \not\in N_G(v) \cup N_G(v')$ then~$i$ is not a neighbor of~$z$ in both allocations, and thus~$n_z(\alloc) = n_z(\alloc')$;
        \item If~$z \in N_G(v) \cap N_G(v')$, then~$i$ is a neighbor of~$z$ in both allocations, and thus~$n_z(\alloc) = n_z(\alloc')$;
        \item If~$z \in N_G(v) \setminus N_G(v')$, then~$i$ is a neighbor of~$z$ in~$\alloc$ but not in~$\alloc'$, and thus~$n_z(\alloc) = n_z(\alloc')+1$;
        \item If~$z \in N_G(v') \setminus N_G(v)$, then~$i$ is a neighbor of~$z$ in~$\alloc'$ but not in~$\alloc$, and thus~$n_z(\alloc) = n_z(\alloc')-1$.
    \end{itemize}
    Therefore, 
    \begin{align*}
        \Phi(\alloc) - \Phi(\alloc') 
        &= n_v(\alloc) - n_{v'}(\alloc') + \sum_{z \in O} \bigg ( n_z(\alloc) - n_z(\alloc') \bigg) \\
        &= n_v(\alloc) - n_{v'}(\alloc') + |O \cap (N_G(v) \setminus N_G(v'))| - |O \cap (N_G(v') \setminus N_G(v))| \\
        &= n_v(\alloc) - n_{v'}(\alloc') + n_v(\alloc) - n_{v'}(\alloc') \\
        &= 2 \bigg( n_v(\alloc) - n_{v'}(\alloc') \bigg).
    \end{align*}
    If the utility function is non-decreasing,~$\util_i(\alloc) > \util_i(\alloc')$ implies that~$n_v(\alloc) > n_{v'}(\alloc')$, and thus~$\Phi$ is a potential function. If the utility function is non-increasing,~$\util_i(\alloc) > \util_i(\alloc')$ implies that~$n_v(\alloc) < n_{v'}(\alloc')$, and thus~$-\Phi$ is a potential function.

    Finally, we argue that the dynamics converges in polynomial time. The potential~$\Phi$ is integral and satisfies~$0 \leq \Phi(\alloc) \leq \numAgents(\numAgents - 1)$ for every allocation~$\alloc$, as each of the at most~$\numAgents$ occupied vertices has at most~$\numAgents - 1$ occupied neighboring vertices. Moreover, by the computation above, every improving jump changes~$\Phi$ by~$2\left(n_{v'}(\alloc') - n_v(\alloc)\right) \neq 0$, i.e., by at least~$2$ in the direction of the respective potential. Hence, the dynamics terminates after at most~$\numAgents(\numAgents - 1)/2$ improving jumps, and in each step, an improving jump (or a certificate that none exists) can be found in polynomial time by enumerating all pairs of an agent and an empty vertex. This completes the proof.
\end{proof}

We next show that it is hard to decide if a game admits a jump-stable allocation. 
The reduction is inspired by the reduction of \citet[Theorem 9]{Peters2016} showing that it is \NPh to decide whether a given \emph{Anonymous Hedonic Game with Dichotomous Preferences} admits a Nash stable solution.

\begin{theorem}
    \label{thm:jump:binary:NPc}
    It is \NPc to decide whether a network allocation game~$\Gamma$ admits a jump stable allocation, even if all utility functions are binary.
\end{theorem}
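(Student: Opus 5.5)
Membership in \NP is immediate: given an allocation~$\alloc$, we check every pair of an agent~$i$ and an empty vertex~$v$ and compare~$\util_i(\alloc)$ with~$\util_i(\alloc^{i\mapsto v})$. The work is in hardness, which I would obtain by reducing from \probName{3SAT}. The plan has three parts: build rigid gadgets that force how a jump-stable allocation looks, attach to them a small unstable sub-instance in the style of \Cref{thm:jump:Non-existence}, and arrange that this sub-instance can be calmed down only when the formula is satisfiable. This is the same pattern as Peters' reduction, where an unsatisfiable formula leaves a ``chaser/avoider'' cycle alive.

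\textbf{Construction.} The topology contains one vertex~$v_{C,x}$ for each clause~$C$ and literal~$x$ in~$C$, as in the proof of \Cref{thm:alpha-envy-free:Hardness}. We add a clause agent~$a_C$ for each clause~$C$, with binary utility~$1$ exactly when she has a prescribed number~$d$ of occupied neighbors.
\begin{itemize}
\item Conflicting literal vertices (a literal and its negation, in different clauses) are joined by edges, or by small paths.
\item We add padding so that the only places where a clause agent can have exactly~$d$ occupied neighbors are literal vertices of her own clause that are not adjacent to another occupied literal vertex.
\item To keep clause agents in their own clause, each clause gets a private clique of ``anchor'' agents with utility~$1$ only at their exact degree inside the clique. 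Anchors are then satisfied, and hence never move. They give~$v_{C,\cdot}$ the right count for~$a_C$, while every vertex outside clause~$C$ is short by the anchor contribution.
\end{itemize}

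\textbf{Linking to the unstable gadget.} We attach a copy of the two-agent instance from \Cref{thm:jump:Non-existence}: an avoider with~$\util(0)=1,\util(1)=0$ and a chaser with~$\util(0)=0,\util(1)=1$. The idea is to let a single auxiliary agent, who is satisfied only if every clause agent is satisfied, fill the empty spots of the path gadget and thereby freeze it. Implementing ``all clause agents satisfied'' with anonymous counts is the delicate part. I would route it through an extra vertex~$h$ adjacent to a satisfaction indicator per clause, for example a vertex that becomes occupied only when~$a_C$ sits consistently. Alternatively, I would follow Peters: give every clause agent the option of sitting in the gadget region, which is stable only if she has a consistent literal slot, and otherwise she is unsatisfied and keeps jumping.

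\textbf{Correctness.} If the formula is satisfiable, we place each~$a_C$ on a true literal of~$C$ and the anchors in their cliques, and freeze the gadget. We then verify that no agent gains by any jump; since utilities are binary, this only requires checking that every unsatisfied agent has no empty vertex with the right count. Conversely, in a jump-stable allocation every clause agent must be satisfied, since otherwise the gadget cycles. She therefore sits on a literal of her own clause, no two occupied literals conflict, and reading off these literals gives a satisfying assignment.

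The main obstacle is controlling the exact neighbor counts of empty vertices globally: with binary ``exact count'' utilities, a stray empty vertex anywhere in the graph might offer the right count and create a spurious deviation or a spurious stable state. I would handle this by giving every gadget a distinct target count and adding enough pendant padding that counts in different gadgets never coincide.
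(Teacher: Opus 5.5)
\textbf{There is a genuine gap.} Your proposal is not yet a reduction. The one mechanism the whole proof rests on is forcing every clause agent to be satisfied in \emph{every} jump-stable allocation. You leave it as ``the delicate part'' with two unspecified alternatives, and the concrete pieces you do offer would not work as stated.

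\emph{The conjunction gadget is circular.} An agent's utility depends only on her own neighbor count, so ``an auxiliary agent satisfied only if every clause agent is satisfied'' must be encoded in the count at a single vertex. Your indicator vertices that ``become occupied only when $a_C$ sits consistently'' would need occupants whose own incentive to be there depends on $a_C$'s state. Also, a single agent fills one vertex of the path, not all of its empty spots.

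\emph{Global jumps break the embedded path gadget.} Once the two-agent path of \Cref{thm:jump:Non-existence} sits inside a large graph, it is no longer unstable. The avoider (target count $0$) and the chaser (target count $1$) can settle at any empty vertex of $G$ with the right count. Your pendant padding makes this worse: empty pendant vertices have count $0$ or $1$, exactly those two targets.

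\emph{The converse presupposes the missing gadget.} Your step ``otherwise the gadget cycles'' assumes the gadget exists. Without it, an unsatisfied clause agent (e.g., one next to a conflicting occupied literal) or a misplaced anchor may simply have no improving jump and stay put in a stable allocation. Moreover, by anonymity, anchors are equally happy in any clique of the right size, so you cannot assume they sit in ``their'' clause.

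\textbf{What the paper does instead.} It supplies exactly these ingredients, reducing from RX3C rather than 3SAT.
\begin{itemize}
\item The topology is $n+1$ disjoint $n$-cliques plus an apex adjacent to all of them. In any stable allocation, the apex is occupied by a special agent who approves only count $n-1$. This pins every other agent's count to the size of her clique's cluster, which is the global control over counts that you were missing.
\item Every element and set agent approves count $1$, i.e., being alone in a clique. By pigeonhole an empty clique always exists, so no such agent can remain unsatisfied in a stable allocation.
\item A guard approving only count $2$ plays the chaser. Any non-guard singleton either attracts the guard, after which her count becomes $2$ and she leaves, or the guard's current partner leaves. So this outside option can never be used in equilibrium.
\item Consequently, every element agent must lie in a cluster of some size $12j$ with $e_i\in S_j$. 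The distinct sizes $12j$ tie clusters to sets and yield an exact cover.
\end{itemize}
If you want to keep 3SAT, you would need an analogous universally available but destabilized outside option, together with an apex-like device that fixes the counts of all empty vertices.
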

\begin{proof}
    To show \NPhness, we reduce from the \probName{Restricted Exact Cover by~$3$-Sets} (RX3C) problem, which is known to be \NPc~\citep{Gonzalez1985}. In this problem, we are given a universe~$\mathcal{U}=\{e_1,\ldots,e_{3p}\}$ and a family~$\mathcal{S} = \{S_1,\ldots,S_{3p}\}$ of~$3$-element subsets of~$\mathcal{U}$ such that every~$e_i$ appears in exactly~$3$ sets of~$\mathcal{S}$. The question is whether there exists a set~$X\subseteq \mathcal{S}$ of size~$p$ such that~$\bigcup_{S\in X} S = \mathcal{U}$.

    Given an instance~$\mathcal{I}=(\mathcal{U},\mathcal{S})$ of RX3C, we construct an equivalent game~$\Gamma$ as follows. The high-level idea is that the agents will form clusters in the cliques of the topology, where each cluster represents either a set~$S_j \in \mathcal{S}$ chosen for the cover or an auxiliary triplet of leftover set agents; the special agent at the apex acts as a global ``anchor'', and the guard agent prevents any agent from being alone in a clique. Moreover, every agent except the special agent is also happy when fully isolated; this ensures that in a stable allocation, no agent other than the special agent can occupy the apex, as anybody else would prefer to move to an empty clique.

    For every element~$e_i\in\mathcal{U}$, we create one \emph{element agent}~$a_i$, which represents the requirement that~$e_i$ must be covered. For every set~$S_j\in\mathcal{S}$, we create~$12j-3$ \emph{set agents}~$s_j^1,\ldots,s_j^{12j-3}$; together with the three element agents corresponding to the elements of~$S_j$, these~$12j$ agents will form the ``cluster'' of~$S_j$ when~$S_j$ is selected for the cover. The factor~$12j$ varies with~$j$ to give each set a distinct preferred cluster size, ensuring that an agent in a~$12j$-cluster can only be associated with the unique set~$S_j$. To finalize the set of agents, we add one \emph{guard agent}~$g$ and one \emph{special agent}~$s^*$.
    
    The utilities are as follows:
    \begin{description}
        \item[Element agent~$a_i$:]~$\util_{a_i}(x) = 1$ if~$x \in \{ 0, 1 \} \cup \{ 12j \mid e_i \in S_j \}$, and~$0$ otherwise. That is,~$a_i$ is happy fully isolated, alone in a clique (with only the apex agent as a neighbor), or in the cluster of any set containing~$e_i$.
        \item[Set agent~$s_j^\ell$:]~$\util_{s_j^\ell}(x) = 1$ if~$x \in \{0,1,3,12j\}$, and~$0$ otherwise. A set agent is happy fully isolated or alone in a clique, in a triplet of set agents (when her set is unselected), or in her full set-cluster of size~$12j$ (when her set is selected).
        \item[Guard agent~$g$:]~$\util_{g}(x) = 1$ if~$x \in \{0, 2\}$, and~$0$ otherwise. Besides being fully isolated, the guard is happy in a cluster of size exactly~$2$, which, when the apex is occupied, forces her to ``chase'' any agent that is alone in a clique --- the guard can jump there to form a pair.
        \item[Special agent~$s^*$:]~$\util_{s^*}(x) = 1$ if~$x = \numAgents - 1$ (recall that~$\numAgents$ denotes the total number of agents), and~$0$ otherwise. The special agent is happy only when adjacent to every other agent, which can be achieved only at the apex of the topology described below.
    \end{description}
    
    The topology~$G$ consists of~$\numAgents + 1$ disjoint cliques~$C_1,\ldots,C_{\numAgents+1}$, each of size~$\numAgents$, together with an \emph{apex} vertex~$v$ adjacent to every vertex of every clique. The role of the apex is to give the special agent a unique location achieving her peak utility, while simultaneously providing every other agent with a constant ``$+1$ neighbor'' contribution from~$s^*$. The cliques act as bins for clusters of agents, and having~$\numAgents + 1$ cliques ensures, by the pigeonhole principle, that at least one clique is empty in every allocation, giving any unhappy agent the option to jump and become a singleton. See \Cref{fig:jump:binary:NPc:topology} for an illustration.

    \begin{figure}[tb!]
        \centering
        \begin{tikzpicture}[
            clique/.style={draw, circle, minimum size=18mm, dotted, gray!70, thick},
            agent/.style={circle, fill=black, inner sep=1.2pt},
        ]
            \node[draw,circle,thick,inner sep=2.5pt] (v) at (0, 3) {$v$};

            \foreach \i/\xpos/\ypos/\name in {1/-4.4/2/$C_1$, 2/-2.2/0.5/$C_2$, 3/2.2/0.5/$C_n$, 5/4.4/2/$C_{\numAgents+1}$} {
                \node[clique] (C\i) at (\xpos,\ypos) {};
                \node[below=2mm of C\i] {\name};
                \foreach[count=\j] \angle in {30, 90, 150, 210, 270, 330} {
                    \node[agent] (a\i\j) at ($(C\i) + (\angle:6.5mm)$) {};
                }
            }

            \foreach \l in {1,2,3,5} {
                \foreach \i in {1,...,5} {
                    \pgfmathsetmacro \ii {\i+1};
                    \foreach \j in {\ii,...,6} {
                        \draw[gray!50] (a\l\i) -- (a\l\j);
                    }
                }
            }

            \node at (0,0.5) {$\cdots$};

            \draw[black!60,thick,scaling nfold=6] (v) to[out=200,in=40] (C1);
            \draw[black!60,thick,scaling nfold=6] (v) to[out=260,in=60] (C2);
            \draw[black!60,thick,scaling nfold=6] (v) to[out=280,in=120] (C3);
            \draw[black!60,thick,scaling nfold=6] (v) to[out=340,in=140] (C5);
        \end{tikzpicture}
        \caption{The topology~$G$ used in the reduction of \Cref{thm:jump:binary:NPc}. The apex vertex~$v$ is connected to every vertex of every clique~$C_1, \ldots, C_{\numAgents+1}$, each of size~$\numAgents$. Dotted circles represent cliques; bold dots represent vertices.}
        \label{fig:jump:binary:NPc:topology}
    \end{figure}

    Throughout the proof, we use the following observation. With~$s^*$ placed at the apex~$v$, every other agent~$i$ placed in a clique~$C_q$ has exactly~$|\alloc(\agents) \cap V(C_q)|$ allocated neighbors: all other agents in~$C_q$ (since~$C_q$ is a clique) plus~$s^*$ at the apex. We refer to~$|\alloc(\agents) \cap V(C_q)|$ as the \emph{cluster size} of~$C_q$. Note also that, by the pigeonhole principle, with~$\numAgents + 1$ cliques and at most~$\numAgents$ agents, at least one clique is always empty in any allocation.

    For correctness, suppose first that~$\mathcal{I}$ is a \Yes-instance with exact cover~$X = \{S_{j_1},\ldots,S_{j_p}\}$. Construct~$\alloc^*$ as follows: place~$s^*$ at the apex~$v$; for each~$i \in [p]$, place all~$12 j_i - 3$ set agents of~$S_{j_i}$ together with the three element agents~$\{a_\ell \mid e_\ell \in S_{j_i}\}$ in clique~$C_i$; for each unselected set~$S_j$, place its~$12j - 3$ set agents in arbitrary triplets in distinct otherwise-empty cliques; and place~$g$ alone in some otherwise-empty clique. Since the apex is adjacent to all other vertices,~$s^*$ has~$\numAgents - 1$ neighbors and utility~$1$; by the cluster-size observation, every agent in clique~$C_i$ has~$12 j_i$ neighbors and utility~$1$, every set agent in a triplet has~$3$ neighbors and utility~$1$, and~$g$ has~$1$ neighbor and utility~$0$. Every non-guard agent already attains utility~$1$, the maximum value of a binary utility, so no such agent has an improving jump. The guard would benefit only from having~$0$ or~$2$ occupied neighbors. Since~$s^*$ occupies the apex, every clique vertex has at least one occupied neighbor, so~$g$ cannot become fully isolated; and since no clique other than~$g$'s own has cluster size~$1$, no jump of~$g$ yields exactly~$2$ occupied neighbors. Hence~$\alloc^*$ is jump stable.

    Conversely, suppose~$\Gamma$ admits a jump stable allocation~$\alloc$. We establish the structure of~$\alloc$ through a sequence of claims.

    \begin{claim}\label{cl:jump:apex}
        In~$\alloc$, the apex is occupied by~$s^*$.
    \end{claim}
    \begin{claimproof}
        First, observe that no agent except~$s^*$ approves the value~$\numAgents - 1$: the largest value approved by any element or set agent is~$12j \leq 36p$, the guard approves only~$0$ and~$2$, and~$\numAgents - 1 \geq 3p + (12 \cdot 3p - 3) + 1 > 36p$. Now, suppose that the apex~$v$ is occupied by some agent~$w \neq s^*$. Since~$v$ is adjacent to every clique vertex,~$w$ has exactly~$\numAgents - 1$ occupied neighbors and, by the observation above, utility~$0$. By the pigeonhole principle, some clique is empty, and by jumping there,~$w$ becomes fully isolated with~$0$ occupied neighbors and utility~$1$; a contradiction with~$\alloc$ being jump stable. Hence, the apex is either empty or occupied by~$s^*$. Suppose the apex is empty, and let~$c$ be the cluster size of the clique hosting~$s^*$. If~$c = \numAgents$, i.e., all agents share one clique, then every agent has~$\numAgents - 1$ occupied neighbors; hence, every agent except~$s^*$ has utility~$0$ and improves by jumping to a vertex of an empty clique, where she is fully isolated and has utility~$1$. If~$c < \numAgents$, then~$s^*$ has~$c - 1 < \numAgents - 1$ occupied neighbors and utility~$0$, and she improves by jumping to the apex, where she becomes adjacent to all remaining~$\numAgents - 1$ agents. In either case,~$\alloc$ is not jump stable; a contradiction. Consequently,~$\alloc(s^*) = v$.
    \end{claimproof}

    By \Cref{cl:jump:apex} and the cluster-size observation, every non-special agent's neighbor count in~$\alloc$ equals the cluster size of her clique.

    \begin{claim}\label{cl:jump:singleton}
        In~$\alloc$, the only agent (if any) placed alone in a clique is the guard~$g$.
    \end{claim}
    \begin{claimproof}
        Suppose some non-guard agent~$a$ is alone in clique~$C_q$. Then~$g$ is in some other clique with cluster size~$c_g$, hence~$\util_g(\alloc) = 0$ unless~$c_g = 2$. If~$c_g \neq 2$,~$g$ jumps to~$C_q$, making cluster size~$2$ and gaining utility~$1$ --- improving. If~$c_g = 2$, then~$g$'s clique-mate is some agent~$b$ which, since~$s^*$ occupies the apex by \Cref{cl:jump:apex}, is an element or set agent;~$b$ has~$2$ occupied neighbors, hence utility~$0$ (no element or set agent approves~$2$), and~$b$ jumps to an empty clique (which exists by the pigeonhole principle) for cluster size~$1$ and utility~$1$ --- improving. Either way,~$\alloc$ is not jump stable, a contradiction.
    \end{claimproof}

    Next, we show that each clique contains only a certain number of agents.

    \begin{claim}\label{cl:jump:cluster-sizes}
        In~$\alloc$, every clique has cluster size in~$\{0, 1, 3\} \cup \{12j \mid j \in [3p]\}$.
    \end{claim}
    \begin{claimproof}
        Consider a clique~$C_q$ with cluster size~$c \notin \{0, 1, 3\} \cup \{12j : j \in [3p]\}$. By \Cref{cl:jump:singleton},~$C_q$ contains at least one non-guard agent~$a$ (a singleton-guard would have~$c = 1$, excluded); moreover,~$a \neq s^*$, as~$s^*$ occupies the apex by \Cref{cl:jump:apex}. Agent~$a$ has~$c$ neighbors and utility~$0$ (since~$c$ is not in~$a$'s approval set: element agents approve~$\{0, 1\} \cup \{12j\}$, set agents~$\{0, 1, 3, 12j\}$, and~$c \notin \{0, 1, 3\} \cup \{12j \mid j \in [3p]\}$). By \Cref{cl:jump:singleton}, no other clique has cluster size~$1$, so by the Pigeonhole principle, some clique is empty;~$a$ jumps there to become a singleton with utility~$1$. Improving, contradicting stability.
    \end{claimproof}

    Then, we show that ``large'' cliques have a very specific structure in terms of the agents they can contain.

    \begin{claim}\label{cl:jump:size-12j}
        Every cluster of size~$12j$ in~$\alloc$ consists only of set agents of~$S_j$, element agents~$a_\ell$ with~$e_\ell \in S_j$, and possibly the guard~$g$.
    \end{claim}
    \begin{claimproof}
        An agent in a cluster of size~$12j$ has~$12j$ occupied neighbors. The element agent~$a_i$ has utility~$1$ at~$12j$ if and only if~$e_i \in S_j$, and the set agent~$s_{j'}^\ell$ has utility~$1$ at~$12j$ if and only if~$j' = j$. Any element or set agent with utility~$0$ in this cluster could jump to an empty clique (which exists by the pigeonhole principle) for cluster size~$1$ and utility~$1$, contradicting stability. The special agent occupies the apex by \Cref{cl:jump:apex}. Hence, apart from the guard~$g$ --- who has utility~$0$ in such a cluster but may lack an improving jump --- the cluster contains only set agents of~$S_j$ and element agents~$a_i$ with~$e_i \in S_j$.
    \end{claimproof}

    Now, we are ready to show that~$\mathcal{I}$ is indeed a \Yes-instance. We set 
    \[
        X = \big\{S_j \mid \alloc \text{ contains a cluster of size } 12j \text{ for some }j\in[3p]\big\}
    \] and claim that~$X$ is an exact cover of~$\mathcal{U}$ of size~$p$. By \Cref{cl:jump:cluster-sizes}, every cluster has size~$0$,~$1$,~$3$, or~$12j$. A size-$1$ cluster contains only~$g$ by \Cref{cl:jump:singleton}, and a size-$3$ cluster cannot contain an element agent, as she would have utility~$0$ at~$3$ occupied neighbors and an improving jump to an empty clique. Hence, every element agent lies in some cluster of size~$12j$ which, by \Cref{cl:jump:size-12j}, contains only agents associated with~$S_j$ and possibly~$g$; in particular,~$e_i \in S_j$ for every element agent~$a_i$ in this cluster. Moreover, for every~$j$, there is at most one cluster of size~$12j$, as two such disjoint clusters would require at least~$2 \cdot 12j - 1$ agents associated with~$S_j$, while only~$12j$ exist. Now, we count the element agents. A cluster of size~$12j$ consists of at most~$12j - 3$ set agents of~$S_j$, at most three element agents, and possibly~$g$; since there is a single guard agent, at most one slot in at most one cluster of~$X$ is occupied by~$g$, and thus the clusters of~$X$ together contain at least~$3|X| - 1$ and at most~$3|X|$ element agents. As shown above, they contain all~$3p$ element agents, so~$3p \in \{3|X| - 1, 3|X|\}$, and since~$3|X| - 1$ is not divisible by three, we get~$3|X| = 3p$, i.e.,~$|X| = p$. Consequently, every cluster of~$X$ contains exactly three element agents, which are precisely the element agents of its set~$S_j$. Since each element agent lies in exactly one cluster, the sets in~$X$ are pairwise disjoint as subsets of~$\mathcal{U}$ and together cover all~$3p$ elements; that is,~$X$ is an exact cover of size~$p$.

    The reduction can be clearly performed in polynomial time, completing the \NPhness part of the proof. As membership in \NP is trivial: given an allocation~$\alloc$, one can check in polynomial time that~$\util_i(\alloc) \geq \util_i(\alloc^{i\mapsto v})$ for every agent~$i\in\agents$ and every empty vertex~$v\in V(G)$, we conclude that the problem under consideration is indeed \NPc.
\end{proof}

\subsection{Local Jumps}\label{sec:localJumps}

We now focus on the case where agents can only jump a certain distance away from their location. Clearly, for utility functions that are all monotone of the same type, such an allocation always exists due to \Cref{thm:jump:monotone}. Our first observation is a strengthening of the non-existence of jump-stable allocation when agents can only jump to vertices that are neighbors of their current location.

\begin{figure}[bt!]
    \centering
    \begin{tikzpicture}
        \node[draw,circle,thick,cbGood!80,fill=cbGood!40] (A) at (0,0) {};
        \node[draw,circle,thick,cbGood!80,fill=cbGood!40] (B) at (-1,-1.75) {};
        \node[draw,circle,thick] (C) at (1,-1.75) {};

        \draw[thick] (A) -- (B) node[draw,midway,circle,cbHard,fill=cbHard!90] (d) {};
        \draw[thick] (A) -- (C) node[draw,midway,circle,fill=white] (e) {};
        \draw[thick] (B) -- (C) node[draw,midway,circle,fill=white] (f) {};
        \draw[thick] (d) -- (e) -- (f) -- (d); 
        \draw[cbHard,thick,->] (d) to[out=30,in=150] (e);
    \end{tikzpicture}\qquad
    \begin{tikzpicture}
        \node[draw,circle,thick,cbGood!80,fill=cbGood!40] (A) at (0,0) {};
        \node[draw,circle,thick] (B) at (-1,-1.75) {};
        \node[draw,circle,thick] (C) at (1,-1.75) {};

        \draw[thick] (A) -- (B) node[draw,midway,circle,cbHard,fill=cbHard!90] (d) {};
        \draw[thick] (A) -- (C) node[draw,midway,circle,fill=white] (e) {};
        \draw[thick] (B) -- (C) node[draw,midway,circle,cbGood!80,fill=cbGood!40] (f) {};
        \draw[thick] (d) -- (e) -- (f) -- (d);
        \draw[cbHard,thick,->] (d) to[out=210,in=90] (B);
    \end{tikzpicture}\qquad
    \begin{tikzpicture}
        \node[draw,circle,thick,cbHard,fill=cbHard!90] (A) at (0,0) {};
        \node[draw,circle,thick,cbGood!80,fill=cbGood!40] (B) at (-1,-1.75) {};
        \node[draw,circle,thick] (C) at (1,-1.75) {};

        \draw[thick] (A) -- (B) node[draw,midway,circle,fill=white] (d) {};
        \draw[thick] (A) -- (C) node[draw,midway,circle,fill=white] (e) {};
        \draw[thick] (B) -- (C) node[draw,midway,circle,cbGood!80,fill=cbGood!40] (f) {};
        \draw[thick] (d) -- (e) -- (f) -- (d);
        \draw[cbHard,thick,->] (A) to[out=-30,in=90] (e);
    \end{tikzpicture}

    \qquad

    \begin{tikzpicture}
        \node[draw,circle,thick,cbHard,fill=cbHard!90] (A) at (0,0) {};
        \node[draw,circle,thick] (B) at (-1,-1.75) {};
        \node[draw,circle,thick] (C) at (1,-1.75) {};

        \draw[thick] (A) -- (B) node[draw,midway,circle,cbHard,fill=cbHard!90] (d) {};
        \draw[thick] (A) -- (C) node[draw,midway,circle,cbHard,fill=cbHard!90] (e) {};
        \draw[thick] (B) -- (C) node[draw,midway,circle,fill=white] (f) {};
        \draw[thick] (d) -- (e) -- (f) -- (d); 
        \draw[cbHard,thick,->] (e) to[out=210,in=90] (f);
    \end{tikzpicture}\qquad
    \begin{tikzpicture}
        \node[draw,circle,thick] (A) at (0,0) {};
        \node[draw,circle,thick] (B) at (-1,-1.75) {};
        \node[draw,circle,thick] (C) at (1,-1.75) {};

        \draw[thick] (A) -- (B) node[draw,midway,circle,cbHard,fill=cbHard!90] (d) {};
        \draw[thick] (A) -- (C) node[draw,midway,circle,cbHard,fill=cbHard!90] (e) {};
        \draw[thick] (B) -- (C) node[draw,midway,circle,cbHard,fill=cbHard!90] (f) {};
        \draw[thick] (d) -- (e) -- (f) -- (d); 
        \draw[cbHard,thick,->] (e) to[out=150,in=270] (A);
    \end{tikzpicture}\qquad
    \begin{tikzpicture}
        \node[draw,circle,thick,cbHard,fill=cbHard!90] (A) at (0,0) {};
        \node[draw,circle,thick,cbHard,fill=cbHard!90] (B) at (-1,-1.75) {};
        \node[draw,circle,thick,cbHard,fill=cbHard!90] (C) at (1,-1.75) {};

        \draw[thick] (A) -- (B) node[draw,midway,circle,fill=white] (d) {};
        \draw[thick] (A) -- (C) node[draw,midway,circle,fill=white] (e) {};
        \draw[thick] (B) -- (C) node[draw,midway,circle,fill=white] (f) {};
        \draw[thick] (d) -- (e) -- (f) -- (d);
        \draw[cbHard,thick,->] (C) to[out=150,in=30] (f);
    \end{tikzpicture}
    
    \caption{All possible non-isomorphic allocations in the instance used to prove~\Cref{thm:1-jump-counterexample}. Recall that our agents have identical preferences such that utility is~$1$ if and only if they have exactly one neighbor. We use green (light gray) to highlight agents with utility~$1$ in an allocation, and red (dark gray) to highlight agents with utility~$0$. In each case, we depict one possible improving~$1$-jump deviation.}
    \label{fig:1-jump-counterexample}
\end{figure}

We begin by showing that~$1$-jump stable allocations may fail to exist in general.

\begin{theorem}
    \label{thm:1-jump-counterexample}
    A~$1$-jump stable allocation may fail to exist, even if there are three agents with identical binary/single-peaked utility functions.
\end{theorem}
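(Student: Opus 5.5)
We use the six-vertex topology of \Cref{fig:1-jump-counterexample}: an outer triangle of degree-$2$ \emph{corner} vertices $A,B,C$ whose sides are subdivided by degree-$4$ \emph{midpoint} vertices $d$ (on $AB$), $e$ (on $AC$), $f$ (on $BC$), with $d,e,f$ also pairwise adjacent. So $E=\{Ad,dB,Ae,eC,Bf,fC,de,ef,df\}$. We place three agents with the identical binary utility $\util(1)=1$ and $\util(0)=\util(2)=0$; an agent is happy if and only if she has exactly one occupied neighbor.

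For the single-peaked case we take the peak at $z^*=1$, namely $\util(0)=1$, $\util(1)=2$, $\util(2)=0$. The plan is to check every allocation under the binary utility and then verify that each exhibited deviation is still improving under the single-peaked one. Under the single-peaked utility, a jump from $0$ to $1$ occupied neighbors, or from $2$ to $1$ or $0$, is strictly improving. Under binary utilities the corresponding jumps (from $0$ or $2$ to $1$) are improving as well, so the same witnesses should work. The one case needing care is a jump from $2$ to $0$ neighbors, which improves only under single-peaked utilities; we will avoid relying on it in the binary witnesses.

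Since agents are identical and the topology has the rotation/reflection symmetry group $S_3$ (permuting the corners together with the opposite midpoints), an allocation is just a $3$-subset of the $6$ vertices up to this symmetry. The $20$ subsets split according to how many corners they contain:
\begin{itemize}
\item three corners $\{A,B,C\}$: each agent has $0$ occupied neighbors;
\item three midpoints $\{d,e,f\}$: each has $2$;
\item two corners plus a midpoint, with the midpoint either between the two corners (as $\{A,B,d\}$) or not (as $\{A,B,e\}$, up to symmetry);
\item one corner plus two midpoints, with the corner adjacent to both (as $\{A,d,e\}$) or to only one (as $\{A,d,f\}$).
\end{itemize}
This gives exactly the six classes of \Cref{fig:1-jump-counterexample}.

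For each class we compute the neighbor counts and exhibit an unhappy agent together with an empty vertex at distance $1$ where she would have exactly one occupied neighbor:
\begin{itemize}
\item In $\{A,B,d\}$, $d$ has $2$ neighbors; she jumps to $e$, whose only occupied neighbor is $A$.
\item In $\{A,B,e\}$, $B$ has $0$ neighbors and $f$ is adjacent to $B$; $B$ jumps to $f$, whose only occupied neighbor is $e$.
\item In $\{A,d,e\}$, the corner $A$ has $2$ neighbors; she jumps to $e$'s side, or rather to an empty vertex adjacent to exactly one of the occupied ones, following the figure.
\end{itemize}
In every case we confirm that the target vertex sees exactly one of the remaining two agents, so the jump is improving under both utilities.

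The main obstacle is bookkeeping rather than ideas: we must make sure the symmetry classification is exhaustive and that every witness jump is genuinely to a neighbor and lands on exactly one occupied neighbor. That is where an off-by-one adjacency error would break the argument, so we will tabulate occupied-neighbor counts for all six vertices in each class explicitly.
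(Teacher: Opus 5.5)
Your route is the paper's own: the same six-vertex graph, the same utility ($1$ exactly when an agent has one occupied neighbor), and an exhaustive check of the same six symmetry classes. Your classification of the $20$ subsets into these classes is correct and exhaustive. However, the case check is the entire content of the proof, and you have only partly carried it out; one of the steps you do give fails.

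In $\{A,d,e\}$, the corner $A$ has $N_G(A)=\{d,e\}$, and both are occupied, so she has no $1$-jump at all. Your fallback, ``an empty vertex adjacent to exactly one of the occupied ones, following the figure'', names neither a mover who can actually move nor a target. The correct witness is $e$ (or symmetrically $d$) jumping to $f$. After the move, the neighbors $B,C,d,e$ of $f$ contain only the occupied $d$.

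Moreover, three classes have no witness at all:
\begin{itemize}
\item In $\{A,B,C\}$, $A$ jumps to $d$ and then sees only $B$.
\item In $\{d,e,f\}$, $e$ jumps to $A$ and then sees only $d$.
\item In $\{A,d,f\}$, both $A$ and $f$ see only $d$ and are happy, so the mover must be $d$, who has $2$ occupied neighbors. Her target must be $B$, where she then sees only $f$; jumping to $e$ instead would leave her adjacent to both $A$ and $f$.
\end{itemize}
With these fixes, every class has a witness that moves an agent from $0$ or $2$ occupied neighbors to exactly $1$. Such a jump is strictly improving under both of your utilities.

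Separately, you do not need a second utility function. With three agents the domain is $\{0,1,2\}$, and $\util(0)=\util(2)=0$, $\util(1)=1$ is itself single-peaked with $z^*=1$. This is how the paper covers ``binary/single-peaked'' with a single instance. Your extra function is harmless, though, since all the witnesses above land on exactly one occupied neighbor.
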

\begin{proof}
    Consider the following game with three identical agents, where the utility of an agent is~$1$ if she has exactly one neighbor, and~$0$ otherwise.
    
    No matter how we allocate the three agents on the vertices of the topology, there will always exist one agent that: 
    a) has either two or zero occupied neighboring vertices, and 
    b) has one empty neighboring vertex (i.e., a valid~$1$-jump move) with exactly one neighbor. 
    There are four different cases for the configurations of the locations of the agents (see \Cref{fig:1-jump-counterexample}): 
    \begin{itemize}
        \item they form a path of length~$2$ (top left and top center configuration), where the agent in the middle always has a profitable~$1$-jump; 
        \item they form an edge plus an isolated vertex (top right configuration), where the isolated agent can always perform a~$1$-jump to a vertex that is adjacent to exactly one occupied vertex;
        \item they form a triangle (bottom left and bottom center configuration), where there is always at least one agent that has a profitable~$1$-jump; 
        \item they form the unique independent set of the graph (bottom right), and each player can~$1$-jump to any neighboring vertex and be exactly adjacent to one other agent.
    \end{itemize}
    Thus, in any configuration, one agent will have an incentive to change her location, and no~$1$-jump stable allocation exists.
\end{proof}

We next observe that the hardness of deciding the existence of jump stable allocations holds, even if we restrict the jump to the distance at most~$2$. 

\begin{theorem}
    \label{thm:twoJump:binary:NPc}
    Given a game~$\Gamma$, it is \NPc to decide whether it admits a~$2$-jump stable allocation.
\end{theorem}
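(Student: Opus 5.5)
The natural plan is to reuse the reduction from \Cref{thm:jump:binary:NPc} almost verbatim. The key observation is that its topology has diameter~$2$: every clique vertex is adjacent to the apex~$v$, so any two vertices of~$G$ are at distance at most~$2$. Hence on this topology every jump is a~$2$-jump, and the two stability notions coincide. I will therefore argue that, for the game~$\Gamma$ built from an RX3C instance~$\mathcal{I}$, an allocation is jump stable if and only if it is~$2$-jump stable. The equivalence with~$\mathcal{I}$ then follows from the correctness proof of \Cref{thm:jump:binary:NPc}.

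Concretely, the steps are as follows.
\begin{enumerate*}[label=(\roman*)]
\item Recall the construction: $\numAgents+1$ disjoint cliques~$C_1,\ldots,C_{\numAgents+1}$ of size~$\numAgents$, and an apex~$v$ adjacent to all clique vertices.
\item Verify distances. We have~$\dist_G(v,x)=1$ for every clique vertex~$x$. Two vertices in the same clique are at distance~$1$. Two vertices in different cliques are at distance~$2$, via~$v$.
\item Conclude that for every allocation~$\alloc$, agent~$i$, and empty vertex~$w$, we have~$\dist_G(\alloc(i),w)\le 2$. Thus every jump deviation is a~$2$-jump deviation, and the sets of jump stable and~$2$-jump stable allocations are equal.
\item Invoke both directions of the proof of \Cref{thm:jump:binary:NPc} unchanged, with utilities still binary.
\end{enumerate*}
Membership in \NP is immediate: given an allocation, we check all pairs of an agent and an empty vertex within distance~$2$, which takes polynomial time after a BFS.

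One subtlety needs checking. The earlier proof's claims use deviations into an empty clique, an apex move of~$s^*$, and moves of~$g$ to a singleton clique. All of these are between vertices at distance at most~$2$, so no argument relied on a longer jump. The diameter observation covers them uniformly, so I expect no real obstacle. The only point to state carefully is that the apex case, where~$s^*$ sits in a clique and the apex is empty, also involves a distance-$1$ move.

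If one instead wanted hardness for larger~$\ell$, or for the distance-restricted notion on sparser graphs, the same idea works for every~$\ell\ge 2$. In that setting the main difficulty would be~$\ell=1$, which remains open; that case is not required here.
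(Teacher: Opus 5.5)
Your proposal is correct and matches the paper's proof: both observe that the topology from the reduction in \Cref{thm:jump:binary:NPc} has diameter~$2$ (any two vertices are adjacent or share the apex~$v$ as a common neighbor). Hence every jump is a~$2$-jump, the two stability notions coincide on these instances, and \NPcness transfers directly.
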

\begin{proof}
    Even though it was not stated formally, it is easy to see that the diameter of the topology used to prove \Cref{thm:jump:binary:NPc} is~$2$---two distinct vertices~$u$ and~$w$ are either direct neighbors, or share the apex vertex~$v$ as a common neighbor. Hence, we immediately obtain \NPcness for~$2$-jump stability as well. 
\end{proof}

The topology constructed in the proof of \Cref{thm:twoJump:binary:NPc} is of diameter two. In contrast, when the topology is of diameter one (i.e.,~$G$ is a complete graph), we have trivial existence even for unrestricted utility functions.

\begin{observation}
    If~$G$ has diameter~$1$, then every allocation is jump stable.
\end{observation}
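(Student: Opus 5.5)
The statement follows almost directly from the definitions, so the plan is a two-line argument based on the structure of a graph of diameter~$1$. Such a graph is complete: every two distinct vertices are adjacent. Hence, in any allocation~$\alloc$, an agent~$i$ at~$\alloc(i)$ has every other agent as a neighbor. She therefore has exactly~$\numAgents-1$ occupied neighbors, and her utility is~$\util_i(\numAgents-1)$.

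Next, consider an arbitrary jump deviation of~$i$ to an empty vertex~$v$, i.e., the allocation~$\alloc^{i\mapsto v}$. Since~$v$ is adjacent to every other vertex of~$G$, it is adjacent to all locations of the agents in~$\agents\setminus\{i\}$, and these locations are unchanged by the jump. Moreover, because~$v$ was empty, no agent other than~$i$ sits on~$v$. Thus~$i$ again has exactly~$\numAgents-1$ occupied neighbors, so~$\util_i(\alloc^{i\mapsto v}) = \util_i(\numAgents-1) = \util_i(\alloc)$. The strict inequality required for a jump deviation therefore never holds, and no agent can deviate.

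This argument does not depend on the utility functions, so it holds for unrestricted utilities. It also covers~$\ell$-jump stability for every~$\ell$, since every~$\ell$-jump deviation is in particular a jump deviation. There is no real obstacle here. The only points to state explicitly are that diameter~$1$ means completeness, and the degenerate case: if there is no empty vertex, then no jump deviation exists at all and the claim holds vacuously.
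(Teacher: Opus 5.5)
Your argument is correct and is essentially the paper's own: diameter~$1$ means $G$ is complete, so every agent has exactly $\numAgents-1$ occupied neighbors in every allocation, and no jump can change her utility. Your extra remarks are accurate but not needed for the observation: the argument does not depend on the utility functions, the case with no empty vertex holds vacuously, and the result extends to $\ell$-jump stability.
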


To see this, observe that in a complete graph, every agent has exactly~$\numAgents - 1$ occupied neighboring vertices in every allocation; hence, no jump can change the utility of any agent.

We next focus on acyclic graphs and explore the existence of~$1$-jump-stable and~$2$-jump-stable allocations. 

\begin{theorem}
\label{thm:oneJump:binary:trees}
    If the topology is a tree and agents have binary utility functions, a~$1$-jump stable allocation is guaranteed to exist and can be found in polynomial time. On the other hand, a~$2$-jump stable allocation may fail to exist, even if the topology is a simple path, and the utility functions are identical and binary/single-peaked.
\end{theorem}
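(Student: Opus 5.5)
The proof has two parts: a constructive algorithm on trees, and a small counterexample on a path.

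\emph{Key structural fact.} Let~$G$ be a tree and let~$S\subseteq V$ induce a connected subtree. If~$v\in S$ and~$w\notin S$ is a neighbor of~$v$, then~$v$ is the only neighbor of~$w$ in~$S$, since a second neighbor would close a cycle. Consequently, if the occupied vertices form a connected subtree~$S$, every~$1$-jump of an agent from~$v$ to an empty neighbor~$w$ leaves her with exactly~$0$ occupied neighbors, because~$v$ is vacated. More generally, in a tree, a~$1$-jump from~$v$ to~$w$ changes the neighbor count only through the vertices of~$N(w)\setminus\{v\}$. These vertices are not adjacent to~$v$ and, in a tree, lie in a different branch from~$v$. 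I will lean on this locality throughout.

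\emph{Algorithm.} Partition the agents into~$A_0=\{i \mid \util_i(0)=0\}$ and~$A_1=\{i\mid \util_i(0)=1\}$. The plan has three steps.

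\begin{enumerate}
\item Root the tree and fill a connected subtree~$S$ in BFS order with the agents of~$A_0$. By the key fact, no~$A_0$-agent ever gains from a~$1$-jump to a vertex outside~$S$: she would land with~$0$ neighbors and utility~$\util_i(0)=0$.
\item Place the~$A_1$-agents. Whenever possible, put each at a vertex where she is happy: either at count~$0$, or at a position already yielding utility~$1$.
\item Run improving~$1$-jump dynamics restricted to~$A_1$-agents. Each such jump either makes the agent happy with~$0$ neighbors or reaches a vertex whose count she approves.
\end{enumerate}

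\emph{Termination.} I would use as potential the number of happy agents, refined lexicographically by the number of edges between occupied vertices. The main obstacle is exactly this step. A jump of an~$A_1$-agent next to another agent~$j$ can destroy~$j$'s happiness, and it can also create new improving jumps for~$A_0$-agents, namely jumps into a vertex adjacent to the newcomer. My first attempt would be to always move unhappy~$A_1$-agents \emph{away} from~$S$, towards leaves, and to process the tree bottom-up in a greedy leaf-first order. Each vertex's status is then fixed once, so the total number of steps is at most~$|V|$, giving polynomial time. If this fails, I would fall back to a greedy induction on leaves: place or skip an agent at a leaf and recurse on the tree with that leaf removed, using the locality fact to check that decisions at processed leaves are never revisited.

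\emph{Counterexample for~$2$-jumps.} Take the path~$1\!-\!2\!-\!3\!-\!4$ and three agents with identical utility~$\util(1)=1$, $\util(0)=\util(2)=0$. This utility is binary, and it is single-peaked with peak~$1$. There are four possible sets of occupied vertices.
\begin{itemize}
\item $\{1,2,3\}$: the agent on~$2$ has two occupied neighbors and utility~$0$. She can jump to~$4$, at distance~$2$, where she has exactly one occupied neighbor (vertex~$3$), so the jump is improving.
\item $\{2,3,4\}$: symmetric to the previous case.
\item $\{1,2,4\}$: the agent on~$4$ has~$0$ neighbors. She can jump to~$3$, where she is adjacent to~$2$ only, since~$4$ is vacated.
\item $\{1,3,4\}$: symmetric to the previous case.
\end{itemize}
Hence no~$2$-jump stable allocation exists.
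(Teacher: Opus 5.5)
Your counterexample for $2$-jumps is correct. The four $3$-subsets of the path on $1,2,3,4$ are exactly the cases you list, each admits the improving jump you describe, and the utility is both binary and single-peaked. The paper uses a $5$-vertex path, but yours suffices.

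The existence part, however, has a genuine gap.
\begin{itemize}
\item You never commit to an algorithm with a proof.
\item Your potential is not monotone: a jump from $v$ to $w$ changes the count of every occupied neighbor of both $v$ and $w$, so several happy agents can become unhappy at once.
\item The claim that ``each vertex's status is fixed once'' comes with no invariant justifying it.
\item The concrete part of the plan cannot work as stated: freezing the $A_0$-agents in a BFS-filled subtree around an arbitrary root, then moving only $A_1$-agents.
\end{itemize}
Here is a counterexample to that last step. Take a star with center $z$ and leaves $\ell_1,\ell_2,\ell_3$. Let agent $a$ have $u_a(x)=1$ exactly when $x=2$, and let agents $b,c$ have $u(x)=1$ exactly when $x=0$. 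Rooting at $\ell_1$ fixes $a$ on $\ell_1$.
\begin{itemize}
\item If $z$ is empty, $a$ improves by jumping to $z$, where she sees two agents.
\item If $b$ or $c$ sits on $z$, that agent has two occupied neighbors and improves by jumping to the empty leaf.
\end{itemize}
So no completion is $1$-jump stable. Concretely, your step~2 puts $b,c$ on $\ell_2,\ell_3$, where they are happy. Step~3 then has nothing to do, and the output leaves $a$ with an improving jump. Yet a stable allocation exists: put $a$ on $z$.

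The missing idea is the exact rule for your leaf-first fallback, together with the invariant that makes it work; this is what the paper does.
\begin{itemize}
\item Root the tree and visit vertices children-before-parents (by subtree height).
\item At $v$, if the number of unvisited vertices equals the number of unallocated agents, fill all of them.
\item Otherwise, place at $v$ an unallocated agent whose utility is $1$ given the (already final) occupancy of $v$'s children, \emph{assuming $v$'s parent stays empty}. If no such agent exists, leave $v$ empty.
\end{itemize}
Why this is stable:
\begin{itemize}
\item An agent placed greedily at $v$ can be unhappy only because her parent was occupied later. So her only possible $1$-jumps go to empty children $z$ of $v$.
\item After such a jump, the neighborhood of $z$ consists of its children, unchanged since $z$ was visited, plus the now-empty $v$.
\item This is exactly the configuration tested when $z$ was visited. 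At that time she was still unallocated and was not happy there, so the jump does not improve her utility.
\item The vertices filled in the final step form a set closed under taking parents. Their agents therefore have occupied parents and can only jump downward, which the same argument handles.
\end{itemize}
No split into $A_0$ and $A_1$, no connected block, and no potential function are needed.
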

\begin{proof}
    Assume that we are given a tree graph~$T$ and we have a set of agents with binary utilities. We root~$T$ at an arbitrary vertex; the notions of parent, child, and height below refer to this rooting. In order to construct a 1-jump stable allocation, we follow the procedure described next. 
    \begin{itemize}
        \item We first order the vertices of~$T$ with respect to the height of their sub-tree, i.e., leaves have height zero and every other vertex has height equal to the maximum height of its children plus 1.
        \item We visit the vertices according to this order, and for each vertex~$v$, we check the following.
        \begin{itemize}
            \item If the number of remaining unvisited vertices (including~$v$) is equal to the number of unallocated agents, then we allocate all unallocated agents to the unvisited vertices in an arbitrary way.
            \item Otherwise, we check if there is an unallocated agent that gets utility 1 when allocated to~$v$ when the children of~$v$ have a fixed allocation and assuming that the parent of~$v$ will remain empty. If there is such an agent, we allocate her to~$v$. Otherwise, we leave~$v$ empty, and we move to the next vertex of the ordering.
        \end{itemize}
    \end{itemize}
    We will prove that indeed the constructed allocation is 1-jump stable. Clearly, if an agent has utility 1, it has no profitable 1-jump. Hence, in what follows, we focus on agents with utility 0 and an empty neighboring vertex.  
    
    We begin by considering the agents that were allocated greedily when their vertex was visited. Assume that agent~$i$ was allocated to vertex~$v$ when~$v$ was visited. In order for agent~$i$ to have utility 0, it means that the parent vertex of~$v$ has been allocated to a different agent later in the process; otherwise~$i$ would have had utility 1, since she obtained utility 1 at~$v$ under the fixed allocation of the children of~$v$ and an empty parent, and the allocation of the children of~$v$ never changes afterwards. Observe that agent~$i$ cannot 1-jump to the parent vertex of~$v$, since it is occupied. Thus, if there is an empty vertex in the neighborhood of~$v$, it is a child of~$v$. We claim that she has no incentives to jump to such a vertex, which we denote~$z$. 
    To see why this is the case, observe that our procedure has visited~$z$ before~$v$, and agent~$i$ was not allocated to~$z$. This is because the agent would get utility 0 given the allocated agents on the children of~$z$ while having its parent-vertex, i.e.,~$v$, empty. Thus, the utility of agent~$i$ after the 1-jump from~$v$ to~$z$ is still 0.
    
    Finally, we consider the agents that were allocated during the last step of the procedure, i.e., to the unvisited vertices. Observe that the procedure guarantees that all parents of unvisited vertices have an allocated agent. Thus, the only possible 1-jumps are towards child vertices that have been visited in the past. Again, we claim that no agent has incentives to 1-jump to such a vertex, following the reasoning above: if an agent could get utility 1 from jumping to a visited vertex~$v$, then she would have been allocated to this vertex when our procedure was considering~$v$. Hence, no agent overall has a profitable 1-jump, and the constructed allocation is 1-jump stable.
    
    For the second claim of the theorem, consider a path~$v_1 v_2 v_3 v_4 v_5$ on five vertices and three agents with identical binary utilities. Every agent has utility 1 if she is next to exactly one occupied vertex and 0 otherwise. We claim that in every configuration, at least one agent has a profitable 2-jump.
    \begin{itemize}
        \item If the allocated vertices form an independent set, then all three can move to a neighboring vertex and be next to an occupied vertex, and increase their utility from 0 to 1.
        \item If the allocated vertices form a path, then the agent at the middle of the path always has a 2-jump that places her next to exactly one occupied vertex, i.e., the occupied vertices form an edge and an independent vertex.
        \item If the allocated vertices form an edge and an independent vertex, then, up to symmetry, the configuration is~$\{v_1, v_2, v_4\}$,~$\{v_1, v_2, v_5\}$, or~$\{v_2, v_3, v_5\}$, and the agent on the independent vertex has a profitable 2-jump to~$v_3$,~$v_3$, or~$v_4$, respectively: in each case, the target vertex is within distance two of her current location and, once she leaves it, is adjacent to exactly one occupied vertex, so her utility increases from 0 to 1.
    \end{itemize}
    That is, in no configuration is there a stable solution, finishing the proof.
\end{proof}
    
\section{Social Welfare and Pareto Optimality}\label{sec:efficiency}

Now, we shift our focus from stability to efficiency. In this section, we will focus on finding optimal allocations with respect to the efficiency notions introduced in \Cref{sec:model}. We start with the case of utilitarian social welfare, egalitarian social welfare, and Nash social welfare, for which we show that, given a bound, it is computationally intractable to decide whether an allocation with welfare at least this bound exists or not. The computational intractability persists, even if the number of agents to allocate is small.

\begin{theorem}
    \label{thm:USW:NPc}\label{thm:ESW:NPc}\label{thm:NSW:NPc}
    \label{thm:USW:numAgents:Wh}\label{thm:ESW:numAgents:Wh}\label{thm:NSW:numAgents:Wh}
    It is \NPc and \Wc when parameterized by the number of agents to decide whether there exists an allocation with utilitarian social welfare at least~$\zeta$, even if the utilities are binary, linearly increasing, or linearly decreasing. The same holds for egalitarian and Nash social welfare.
\end{theorem}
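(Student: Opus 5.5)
The plan is a single parameterized reduction from \probName{Clique} (W[1]-complete parameterized by the solution size $k$, and \NPc), in which the number of agents is exactly $k$. Membership comes first. \NP membership is immediate: guess an allocation and evaluate the welfare. For \W membership I would reduce back to \probName{Clique}-like problems, or more directly invoke a standard W[1] machine characterization: nondeterministically guess the $\numAgents$ host vertices (each guess is one of $|V|$ values, $\numAgents$ guesses), then check adjacency among the $\numAgents^2$ pairs and evaluate the utilities in a bounded number of further steps. This is a tail-nondeterministic program using $f(\numAgents)$ steps with random access, which gives membership in W[1].

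For hardness I take a \probName{Clique} instance $(H,k)$ and set $G=H$ with $\numAgents=k$ agents of identical utilities.
\begin{itemize}
    \item \emph{Linearly increasing:} use $\util(x)=x$. Then USW is $2|E(H[\alloc(\agents)])|$, and this is at least $\zeta=k(k-1)$ iff the occupied set is a clique.
    \item \emph{Binary:} use $\util(x)=1$ iff $x=k-1$, and set $\zeta=k$.
\end{itemize}
For ESW, the threshold $k-1$ (respectively $1$) forces every agent to have $k-1$ occupied neighbours, hence a clique. For NSW, take the binary function, or for the linear one $\util(x)=x+1$ with threshold $k^k$. By AM--GM, the product $\prod(d_i+1)$ with $\sum d_i\le k(k-1)$ and $d_i\le k-1$ reaches $k^k$ only if every $d_i=k-1$. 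More simply, since each $d_i\le k-1$, the product equals $k^k$ iff all $d_i=k-1$, so no AM--GM is needed.

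The decreasing case is handled by complementation, exactly as in the proof of \Cref{thm:alpha-envy-free:Hardness}. Take $G=\overline{H}$ and $\util(x)=(k-1)-x$, which is linearly decreasing with $a=1$ and $b=k-1$. A set is a clique in $H$ iff it is independent in $\overline{H}$, which is exactly when every agent has utility $k-1$. The thresholds are then:
\begin{itemize}
    \item USW: $k(k-1)$;
    \item ESW: $k-1$;
    \item NSW: $(k-1)^k$, or use $b=k$ so that utilities stay positive, giving $k^k$.
\end{itemize}
Each threshold is again attained iff the chosen set is independent in $\overline{H}$. Alternatively, I could reduce from \probName{Independent Set} on $H$ directly.

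The reductions are trivially parameter-preserving and polynomial, so the real work is bookkeeping. The main obstacle is the NSW variants: I must make sure that zero utilities do not trivialize the product, hence the $+1$ shift, and that the linear forms satisfy the paper's constraints $a\in\N$ and $b\ge a(\numAgents-1)$. The \W membership argument, the new part, also needs care: I would phrase it via the machine characterization of W[1], or via a reduction to \probName{Short NTM Acceptance}, where the machine guesses the locations and checks the welfare within $f(k)$ steps.
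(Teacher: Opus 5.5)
Your hardness part and your \NP membership argument match the paper's: you reduce from \probName{$k$-Clique} with $\numAgents=k$ and topology $H$ or $\overline{H}$, using the same binary and linear utilities and essentially the same thresholds. Your $+1$ shift for NSW is harmless but unnecessary, since with threshold $(k-1)^k>0$ a zero factor already pushes the product below the threshold.

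The gap is in the W[1]-membership half. It sits at the step where the machine ``evaluates the utilities in a bounded number of further steps'' and ``checks the welfare within $f(k)$ steps''. The values $\util_i(x)$ and $\zeta$ are input numbers in binary, and their bit-length $\ell$ can be polynomial in $|\Gamma|$. This already happens for linearly increasing or decreasing utilities with large coefficients.
\begin{itemize}
\item In the tail-nondeterministic $\kappa$-restricted NRAM characterization, registers may only hold numbers bounded by $f(\numAgents)\cdot p(|\Gamma|)$. These values do not fit, so adding $\numAgents$ of them and comparing with $\zeta$ after the guess takes a number of steps growing with $\ell$.
\item On a short NTM, the same addition takes $\Theta(\ell)$ head moves.
\item For NSW the problem persists even with polynomially bounded utilities, because the product can reach $p(|\Gamma|)^{\numAgents}$.
\end{itemize}
So your tail is not bounded by a function of $\numAgents$. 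As written, the argument only covers binary utilities, and ESW once you precompute a per-agent table $[\util_i(x)\ge\zeta]$. This is exactly the obstacle the paper's membership proof is built around.

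The missing idea is to move all arithmetic out of the nondeterministic phase. The paper does this with the Frank--Tardos weight-reduction theorem. The $\numAgents^2+1$ numbers $\util_i(x)$ and $\zeta$ are replaced, in polynomial time, by integers of bit-length $\Oh{\numAgents^6}$. These induce the same sign for every relevant combination (one $+1$ per agent, one $-1$ for $\zeta$), so $M$ can add them within $\Oh{\numAgents^7}$ steps. For NSW, the paper first catches zero utilities with a table. It then makes the comparison additive using rational approximations of $\log_2\util_i(x)$ and $\log_2(\zeta-\nicefrac{1}{2})$, accurate enough to preserve signs, and applies the same weight reduction.

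A simpler fix also works, because a parameterized reduction may take $f(\numAgents)\cdot|\Gamma|^{\Oh{1}}$ time. For each of the $\numAgents^{\numAgents}$ neighbour-count profiles $(x_1,\dots,x_{\numAgents})\in[\numAgents-1]_0^{\numAgents}$, precompute whether the respective welfare reaches $\zeta$. Hard-wire this table into the finite control of the NTM, or into registers during the deterministic part of the NRAM program. The tail then only guesses the vertices, computes the profile with $\Oh{\numAgents^2}$ adjacency look-ups, and performs one table look-up. With either device your sketch becomes a proof; without one, W[1]-membership is not established.
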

\begin{proof}
    Membership in \NP is trivial: given an allocation~$\alloc$ as a certificate, one can compute~$\mathcal{SW}(\alloc)$, where~$\mathcal{SW}$ is an arbitrary social welfare function, in polynomial time and check whether~$\mathcal{SW}(\alloc) \geq \zeta$.

    To show \NPhness (and \Whness), we show a reduction from the \probName{$k$-Clique} problem. Here, we are given a graph~$H$ and an integer~$k\in\N$, and the goal is to decide if there is a set~$K\subseteq V(H)$ of size~$k$ such that~$K$ induces a complete graph in~$H$. The problem is very well known to be \NPh~\citep{Karp1972} and \Wc when parameterized by the solution size~$k$~\citep{DowneyF1995}.

    Given an instance~$\mathcal{I}=(H,k)$ of the \probName{$k$-Clique} problem, we construct an equivalent game~$\Gamma$ as follows. The topology~$G$ is identical to the graph~$H$, and we set~$\agents = \{a_1,\ldots,a_k\}$. The utility function of every agent~$i\in\agents$ is defined as
    \[
        \util_i(x) = \begin{cases}
            1 & \text{if } x = k - 1 \text{ and}\\
            0 & \text{otherwise.}
        \end{cases}
    \]
    Clearly, the utilities are identical and binary (we will show how to tweak the reduction for linearly increasing and linearly decreasing utilities later in the proof). To finalize the construction, we set~$\zeta = k$.

    For correctness, assume that~$\mathcal{I}$ is a \Yes-instance and let~$K$ be the clique of size~$k$ in~$H$. Let~$\alloc$ be an allocation mapping the agents arbitrarily to the vertices of~$K$. Since~$K$ is a clique, every agent is adjacent to~$k-1$ other agents, and by the definition of the utility function, has utility~$1$. Therefore, we have~$\USW(\alloc) = k$, which is exactly the required bound. Conversely, if there is an allocation~$\alloc$ with social welfare at least~$\zeta = k$, then every agent must have all the remaining agents in her neighborhood. Hence,~$\alloc$ maps the agents to a clique of size~$k$ in~$H$, finishing the proof for binary utilities.

    \proofparagraph{Linearly increasing utilities.} For linearly increasing utilities, we set~$\util_i(x) = x$ for every~$i$ and~$\zeta = k\cdot(k-1)$. This bound can be again achieved if and only if each agent is the neighbor of all other agents.

    \proofparagraph{Linearly decreasing utilities.} For linearly decreasing utilities, we set~$\util_i(x) = k - 1 - x$,~$G = \bar{H}$ (i.e., the complement of graph~$H$), and~$\zeta = k\cdot(k-1)$ (as in the case of linearly increasing utilities). Now, an allocation~$\alloc$ achieves the bound~$\zeta$ if and only if the agents have no neighbors, which in turn means that they are allocated to a clique in~$H$.

    \proofparagraph{ESW and NSW.} For ESW, we set~$\zeta = 1$ for binary utilities and~$\zeta = k-1$ for linearly increasing and linearly decreasing utilities. This corresponds to the case when every agent has all other agents as her neighbors (or none in the case of linearly decreasing), which is exactly the crucial property for the correctness of the reduction. For NSW, we again set~$\zeta = 1$ for binary utilities, and use~$\zeta = (k-1)^k$ for the remaining utility functions.

    \proofparagraph{Membership in \W.} We give a parameterized reduction to the \probName{Short Nondeterministic Turing Machine Acceptance} problem, which asks whether a given single-tape nondeterministic Turing machine~$M$ accepts its input within~$k$ steps, and which is \Wc when parameterized by~$k$~\citep{Cesati2003}. Crucially, only the number of steps is bounded by the parameter: the state set and the alphabet of~$M$ may be of size polynomial in the input, and both are produced by the reduction in polynomial time; in particular, any polynomial-time-precomputable table can be hard-wired into the transition function of~$M$.

    The main obstacle is that the machine must compare welfare values against~$\zeta$ within a number of steps depending only on~$\numAgents$, while the numbers involved may be exponentially large in the input size. We overcome it using the weight-reduction theorem of \citet{FrankT1987}: given a vector~$w \in \mathbb{Q}^d$ and an integer~$N'$, one can compute in polynomial time a vector~$\bar{w} \in \mathbb{Z}^d$ with~$\|\bar{w}\|_\infty \leq 2^{\Oh{d^3}} \cdot (N')^{\Oh{d^2}}$ such that~$\operatorname{sign}(w \cdot b) = \operatorname{sign}(\bar{w} \cdot b)$ for every~$b \in \mathbb{Z}^d$ with~$\|b\|_1 \leq N' - 1$.

    For USW, let~$w \in \mathbb{Z}^{\numAgents^2 + 1}$ consist of all utility values~$\util_i(x)$,~$i \in [\numAgents]$,~$x \in [\numAgents - 1]_0$, followed by~$\zeta$. For any allocation in which agent~$i$ has~$x_i$ occupied neighboring vertices, the comparison~$\sum_{i} \util_i(x_i) \geq \zeta$ asks for the sign of~$w \cdot b$, where~$b$ has~$\numAgents$ entries equal to~$1$ and one entry equal to~$-1$; hence,~$\|b\|_1 = \numAgents + 1$. Applying the theorem with~$d = \numAgents^2 + 1$ and~$N' = \numAgents + 2$ yields replacement values~$\bar{\util}_i(x)$ and~$\bar{\zeta}$ inducing the same comparisons, whose bit-length is in~$B \in \Oh{\numAgents^6}$.

    For NSW, we first turn the multiplicative comparison into an additive one. If~$\zeta = 0$, the instance is trivially a \Yes-instance, so assume~$\zeta \geq 1$; moreover, allocations in which some agent has utility~$0$ have Nash welfare~$0 < \zeta$ and are recognized directly via a hard-wired table. Otherwise, since the product and~$\zeta$ are integers,~$\prod_i \util_i(x_i) \geq \zeta$ holds if and only if~$\sum_i \log_2 \util_i(x_i) - \log_2(\zeta - \nicefrac{1}{2}) > 0$. Let~$\ell$ be the maximum bit-length of the numbers in the input. Then, both the product and~$\zeta - \nicefrac{1}{2}$ are at most~$2^{\numAgents \ell}$ and differ by at least~$\nicefrac{1}{2}$, so the absolute value of the linear form above is at least~$2^{-(\numAgents\ell + 2)}$. We compute, in polynomial time, rational approximations of the~$\numAgents^2 + 1$ logarithms with additive error at most~$2^{-(\numAgents\ell + 4)}/(\numAgents + 1)$; the sign of the approximate linear form then coincides with the sign of the exact one for every allocation, and we apply the Frank--Tardos reduction to the vector of approximations exactly as in the case of USW. For ESW, no weight reduction is necessary: the reduction precomputes the Boolean table with entries~$T(i,x) = 1$ if and only if~$\util_i(x) \geq \zeta$.

    The machine~$M$, over an alphabet containing one symbol per vertex of~$G$ together with the binary digits, operates as follows. First,~$M$ nondeterministically writes~$\numAgents$ vertex symbols~$v_1, \ldots, v_\numAgents$ on the tape, interpreted as the allocation~$\alloc(i) = v_i$. Then, it deterministically verifies that the guessed vertices are pairwise distinct and computes, for each~$i \in [\numAgents]$, the number~$x_i = |\{ j \neq i \mid v_j \in N_G(v_i)\}|$ of occupied neighboring vertices; the adjacency relation of~$G$ is hard-wired in the transition function, the counters take values in~$[\numAgents - 1]_0$ and are stored in the state, and each pairwise check costs~$\Oh{\numAgents}$ head movements, so this phase takes~$\Oh{\numAgents^3}$ steps. For ESW,~$M$ then simply checks that~$T(i, x_i) = 1$ for every~$i \in [\numAgents]$. For USW and NSW,~$M$ maintains the running sum of the reduced values associated with the pairs~$(i, x_i)$ in binary on a scratch portion of the tape: every involved number has bit-length~$\Oh{B}$, the bits of the reduced values are emitted by the transition function (with the state storing the pair~$(i, x_i)$, the current bit position, and the carry), and hence each of the~$\numAgents$ additions, as well as the final comparison against the reduced threshold, takes~$\Oh{B + \numAgents}$ steps.

    Overall, some computation path of~$M$ accepts within~$k \in \Oh{\numAgents^3 + \numAgents \cdot B} \subseteq \Oh{\numAgents^7}$ steps if and only if~$\Gamma$ admits an allocation with the respective social welfare at least~$\zeta$. Since~$k$ depends only on~$\numAgents$ and the whole construction runs in polynomial time, this is a correct parameterized reduction, concluding the proof.
\end{proof}

Next, we turn our attention to Pareto optimality. First, observe that a Pareto optimal allocation always exists in our model, regardless of the utility profile. The set of allocations is finite and non-empty, and Pareto domination is a strict partial order on this set. Every finite non-empty strict partial order has a maximal element, and a maximal element under Pareto domination is by definition a Pareto optimal allocation. 

\begin{observation}
    \label{thm:PO:alwaysExists}
    Every game~$\Gamma$ admits at least one Pareto optimal allocation.
\end{observation}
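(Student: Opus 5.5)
The argument is a finiteness one: Pareto domination is a strict partial order on a finite non-empty set, so the set has a maximal element, and any maximal element is Pareto optimal. I would carry it out in three steps.

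First, the set $\Pi$ of all allocations is finite and non-empty. An allocation is an injective map $\alloc\colon\agents\to V$, and $|V|\geq\numAgents$ holds by the definition of the model, so at least one injection exists. There are at most $|V|^{\numAgents}$ of them, so $\Pi$ is finite.

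Second, Pareto domination is a strict partial order on $\Pi$. Say $\alloc'$ dominates $\alloc$, written $\alloc'\succ\alloc$, if $\util_i(\alloc')\geq\util_i(\alloc)$ for every $i\in\agents$ and the inequality is strict for some $i$. The relation $\succ$ is irreflexive, because no agent can satisfy $\util_i(\alloc)>\util_i(\alloc)$. It is transitive: if $\alloc''\succ\alloc'\succ\alloc$, then the weak inequalities chain for every agent, and the agent who is strictly better off in $\alloc'$ than in $\alloc$ remains strictly better off in $\alloc''$. Irreflexivity and transitivity together give asymmetry.

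Third, take a maximal element. Suppose $\Pi$ had no $\succ$-maximal element. Then we could build an infinite chain $\alloc_1\prec\alloc_2\prec\cdots$ in $\Pi$. By transitivity and irreflexivity, its elements are pairwise distinct, which contradicts the finiteness of $\Pi$. Equivalently, one can pick $\alloc^*\in\Pi$ maximizing $\USW$: any allocation dominating $\alloc^*$ would have strictly larger utilitarian welfare, which is impossible. By definition, an allocation that no other allocation dominates is Pareto optimal, so the maximal element is Pareto optimal.

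No step is a real obstacle. The only point that needs care is the non-emptiness of $\Pi$, and it follows directly from the assumption $|V|\geq\numAgents$.
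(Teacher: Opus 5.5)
Your proof is correct and follows the paper's argument: the set of allocations is finite and non-empty, Pareto domination is a strict partial order on it, and a maximal element is Pareto optimal by definition. Your alternative of taking a utilitarian-welfare-maximizing allocation also appears in the paper, in part (iv) of \Cref{thm:PO:numAgents:XP}.
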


Hence, we focus on the complexity of verifying and finding such allocations rather than their existence. Unfortunately, the result here is very negative: already verifying whether an allocation~$\alloc$ is Pareto optimal is computationally hard, even if the number of agents is small.

\begin{theorem}
    \label{thm:PO:verify:coNPh}\label{thm:PO:verify:numAgents:coWh}
    It is \coNPh and \coWh when parameterized by the number of agents to verify whether an allocation~$\alloc$ is Pareto optimal, even if all utility functions are identical and binary, linearly increasing, or linearly decreasing.
\end{theorem}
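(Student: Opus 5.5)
We reduce from \probName{$k$-Clique}, which is \NPh~\citep{Karp1972} and \Wh when parameterized by~$k$~\citep{DowneyF1995}, to the \emph{complement} of the verification problem. We map~$(H,k)$ to a game together with an allocation~$\alloc$ such that~$\alloc$ is \emph{not} Pareto optimal if and only if~$H$ contains a~$k$-clique. Since the game has exactly~$k$ agents, this yields both \coNPhness and \coWhness under the parameterization by~$\numAgents$.

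\proofparagraph{Binary utilities.} Assume~$k\geq 3$. The topology~$G$ is the disjoint union of~$H$ and a gadget~$D$, where~$D$ is a complete graph on~$k$ vertices with a single edge~$pq$ removed. There are~$k$ agents with the identical utility~$\util(x)=1$ if~$x=k-1$ and~$0$ otherwise, and~$\alloc$ places them bijectively on~$V(D)$. In~$\alloc$, the~$k-2$ agents not on~$p,q$ have~$k-1$ occupied neighbors and utility~$1$, while the agents on~$p$ and~$q$ have~$k-2$ occupied neighbors and utility~$0$.

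The core is a symmetry observation: call an agent \emph{full} if all other agents are adjacent to her. If~$k-1$ agents are full, then the remaining agent is adjacent to each of them and is therefore full as well.

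Now suppose~$\alloc'$ Pareto dominates~$\alloc$. Then the~$k-2$ agents that are full in~$\alloc$ stay full in~$\alloc'$, and at least one more agent becomes full. So at least~$k-1$ agents are full in~$\alloc'$, and by the observation all~$k$ are. Hence the occupied vertices form a~$k$-clique of~$G$. Such a clique is connected, so it lies in a single component. It cannot lie in~$D$, because~$D$ is not complete, so it lies in~$H$.

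Conversely, a~$k$-clique in~$H$ gives an allocation in which every agent has utility~$1$. This Pareto dominates~$\alloc$, because the agents on~$p$ and~$q$ strictly improve.

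\proofparagraph{Linearly increasing utilities.} Keep~$G$ and~$\alloc$ and set~$\util(x)=x$. In a dominating~$\alloc'$, every agent keeps at least her degree among occupied vertices and some agent strictly increases. The agents that had~$k-1$ occupied neighbors remain full. The strictly improving agent must be one of the two that had~$k-2$, so she reaches~$k-1$ and becomes full. That gives~$k-1$ full agents, so again all~$k$ agents are full and they occupy a~$k$-clique of~$H$. The converse direction is identical to the binary case.

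\proofparagraph{Linearly decreasing utilities.} Take the complement of~$G$ and use~$\util(x)=k-1-x$. An agent with~$x$ occupied neighbors in~$G$ has~$k-1-x$ occupied neighbors in the complement, so utilities are preserved exactly and the argument transfers verbatim. Only the instance encoding changes: in the complement, the gadget becomes an edge plus isolated vertices, joined completely to the complement of~$H$.

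\proofparagraph{Main obstacle.} The step requiring care is designing~$\alloc$ so that \emph{every} Pareto improvement forces a full clique. A naive starting allocation in which all agents have utility~$0$ would be dominated by, e.g., a star. The fix is the ``$K_k$ minus an edge'' gadget together with the symmetry observation that~$k-1$ full agents force all~$k$ to be full. We must also rule out that the improvement stays inside the gadget, which is where the choice of a non-complete gadget matters.
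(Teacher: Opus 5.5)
Your proposal is correct and follows the paper's proof essentially step for step. Both use the same disjoint union of $H$ with $K_k$ minus an edge, the same starting allocation on the gadget, the same argument that $k-1$ ``full'' agents force the occupied vertices to form a $k$-clique (which must lie in $H$), and the complement construction for linearly decreasing utilities. The only cosmetic difference is in that last case: you transfer the argument by noting that complementation preserves every agent's utility exactly, whereas the paper re-derives it directly in $\bar{G}$.
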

\begin{proof}
    We again reduce from the \probName{$k$-Clique} problem, i.e., given a graph~$H$ and an integer~$k$, decide if there is~$K\subseteq V(H)$,~$|K| = k$, such that~$H[K]$ is a complete graph. This problem is known to be \NPc~\citep{Karp1972} and \Wc when parameterized by the solution size~$k$~\citep{DowneyF1995}.

    Given~$\mathcal{I} = (H,k)$, we construct an equivalent game~$\Gamma$ as follows. We start with the description of the topology~$G$. It consists of the disjoint union of~$H$ and a graph~$G'$, which is a complete graph on~$k$ vertices with one edge removed. There are exactly~$k$ agents with identical binary utilities (we show how to tweak the reduction for linearly increasing and linearly decreasing utilities later in the proof) defined as follows:
    \[
        \util_i( x ) = \begin{cases}
            1 & \text{if } x = k-1 \text{ and}\\
            0 & \text{otherwise.}
        \end{cases}
    \]
    Finally, we ask whether an allocation~$\alloc$ that maps the agents bijectively to the vertices of~$G'$ is Pareto optimal. Since~$G'$ is~$K_k$ minus one edge, two vertices of~$G'$ have degree~$k-2$ and the remaining~$k-2$ vertices have degree~$k-1$; hence, in~$\alloc$, exactly two agents (those mapped to the endpoints of the removed edge) have utility~$0$, and every other agent has utility~$1$.

    For the correctness, assume first that~$\mathcal{I}$ is a \Yes-instance, that is, it contains a clique~$K$ of size~$k$. Then, in an allocation~$\alloc'$ mapping all agents to vertices of~$K$, the utility of every agent is exactly~$1$, as each of them has exactly~$k-1$ direct neighbors. Therefore,~$\alloc'$ Pareto dominates~$\alloc$, and we obtain that~$\alloc$ is not Pareto optimal.

    Conversely, suppose that~$\alloc$ is not Pareto optimal, and let~$\alloc'$ be an allocation that Pareto dominates~$\alloc$. Since no agent may be worse off, the~$k-2$ agents with utility~$1$ in~$\alloc$ also have utility~$1$ in~$\alloc'$, and since some agent must be strictly better off, at least one of the two agents with utility~$0$ in~$\alloc$ has utility~$1$ in~$\alloc'$. An agent has utility~$1$ if and only if all remaining~$k-1$ agents are her neighbors; hence, at least~$k-1$ of the~$k$ vertices occupied under~$\alloc'$ are adjacent to all other occupied vertices. As at most one occupied vertex lies outside this set, every pair of occupied vertices contains a vertex adjacent to all occupied vertices, so the~$k$ occupied vertices are pairwise adjacent and form a clique of size~$k$ in~$G$. Since~$G$ is the disjoint union of~$H$ and~$G'$, no clique contains vertices of both parts, and~$G'$ itself has only~$k$ vertices and is not complete; therefore, the clique lies entirely in~$H$, and~$\mathcal{I}$ is a \Yes-instance, finishing the correctness.

    The reduction runs in polynomial time. Moreover, we used~$\numAgents = k$, so the reduction is a parameterized reduction as well.

    \proofparagraph{Linearly increasing utilities.} For linearly increasing utilities, the construction is the same. The only difference is that the utility function is~$\util_i(x) = x$ for every~$i\in\agents$. Again, all agents except for two have the maximum possible utility~$k-1$, and the two remaining agents have utility~$k-2$. If~$H$ contains a clique of size~$k$, mapping the agents to its vertices gives every agent utility~$k-1$ and Pareto dominates~$\alloc$. Conversely, a Pareto-dominating allocation must keep the~$k-2$ maximum agents at utility~$k-1$ and raise at least one of the other two to~$k-1$, and the same argument as in the binary case shows that the occupied vertices form a clique of size~$k$ in~$H$.

    \proofparagraph{Linearly decreasing utilities.} For linearly decreasing utilities, we modify the construction in two ways. First, we replace the topology with its complement; that is, the new topology is~$\bar{G}$, the complement of the disjoint union~$H \sqcup G'$. Second, we set the utility of every agent~$i\in\agents$ to~$\util_i(x) = k - 1 - x$. The allocation~$\alloc$ still maps agents bijectively to the vertices of~$G'$. Observe that in~$\bar{G}$, two vertices of~$G'$ are adjacent if and only if they were non-adjacent in~$G'$; since~$G'$ is~$K_k$ minus one edge, the only adjacent pair of~$G'$-vertices in~$\bar{G}$ is the pair of endpoints of the removed edge. Therefore, in~$\alloc$, exactly two agents are mutual neighbors with utility~$\util_i(1) = k-2$, while every other agent has utility~$\util_i(0) = k-1$, the maximum value. In a Pareto-dominating allocation~$\alloc'$, the~$k-2$ agents with utility~$k-1$ must retain it, and at least one of the two agents with utility~$k-2$ must improve to~$k-1$; hence, at least~$k-1$ occupied vertices have no occupied neighbors in~$\bar{G}$. The remaining occupied vertex then has no occupied neighbors either, as every other occupied vertex is non-adjacent to all occupied vertices. Therefore,~$\alloc'(\agents)$ is an independent set of size~$k$ in~$\bar{G}$, i.e., a clique of size~$k$ in~$H \sqcup G'$. Since~$G'$ is not itself a~$k$-clique, any such clique must lie entirely in~$H$. Hence, a Pareto-dominating allocation exists if and only if~$H$ contains a clique of size~$k$.
\end{proof}

Since the reduction above is a parameterized reduction from \probName{$k$-Clique} with~$\numAgents = k$, the ETH-based lower bound for \probName{$k$-Clique} directly translates to the verification problem as well.

\begin{corollary}\label{cor:PO:verify:ETH}
    Unless ETH fails, there is no algorithm that verifies whether a given allocation~$\alloc$ is Pareto optimal in~$f(\numAgents)\cdot|V(G)|^{\oh{\numAgents}}$ time for any computable function~$f$. Consequently, the brute-force verification, which compares~$\alloc$ against all~$|V(G)|^{\Oh{\numAgents}}$ allocations, is asymptotically optimal.
\end{corollary}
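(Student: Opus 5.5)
The plan is to transfer the standard ETH lower bound for \probName{$k$-Clique} through the reduction in the proof of \Cref{thm:PO:verify:numAgents:coWh}. The lower bound I will use is the following: unless ETH fails, there is no algorithm deciding whether a graph $H$ contains a clique of size $k$ in time $f(k)\cdot|V(H)|^{\oh{k}}$ for any computable $f$ (Chen et al.). Since Pareto-optimality verification is a \coNP-type question, I will argue via the complement. An algorithm for verification immediately decides the complement of the question, and \probName{$k$-Clique} is then decided by negating its answer.

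First I will recall the parameters of the reduction. Given $(H,k)$, it builds $G = H \sqcup G'$, where $G'$ is $K_k$ minus an edge, together with $\numAgents = k$ agents and the fixed allocation $\alloc$ onto $G'$. It runs in polynomial time and guarantees that $\alloc$ is Pareto optimal if and only if $H$ has no $k$-clique. For the complement topology used for linearly decreasing utilities, the number of vertices is likewise $|V(H)| + k$. So I will have $|V(G)| = |V(H)| + k \leq 2|V(H)|$, assuming without loss of generality that $k \leq |V(H)|$, since otherwise the instance is trivial.

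Next, suppose there is a verifier running in time $f(\numAgents)\cdot|V(G)|^{g(\numAgents)}$ with $g\in\oh{\numAgents}$. Composing it with the reduction decides \probName{$k$-Clique} in time $\text{poly}(|H|) + f(k)\cdot(2|V(H)|)^{g(k)}$. Expanding gives $(2|V(H)|)^{g(k)} = 2^{g(k)}\cdot|V(H)|^{g(k)}$, and the factor $2^{g(k)}$ is absorbed into a new computable function $f'(k)$. This yields time $f'(k)\cdot|V(H)|^{\oh{k}}$, contradicting ETH. The only minor care is with the $o(\cdot)$ formalism: the polynomial overhead of the reduction must be absorbed as well. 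It is absorbed because $|V(H)|^{\Oh{1}}$ is dominated by, or can be added to, the bound. The standard convention is that the $o(k)$ refers to some nondecreasing unbounded-ratio function $g$ with $g(k)/k \to 0$, so additive polynomial terms are harmless.

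Finally, for optimality, brute force enumerates the at most $|V(G)|^{\numAgents}$ injective allocations $\alloc'$ and checks Pareto domination for each in polynomial time. The total time is $|V(G)|^{\Oh{\numAgents}}$, which matches the lower bound up to the constant in the exponent. I expect no real obstacle here; the only point needing attention is correctly handling the additive $k$ in $|V(G)|$ and the reduction's overhead within the $o(\numAgents)$ exponent.
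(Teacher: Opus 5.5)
Your proposal is correct and follows essentially the same route as the paper: compose a hypothetical verifier with the \probName{$k$-Clique} reduction from \Cref{thm:PO:verify:coNPh}, using $\numAgents = k$ and $|V(G)| = |V(H)| + k$, and negate its answer to contradict the ETH lower bound of Chen et al. The paper simply asserts the resulting $g(k)\cdot|V(H)|^{o(k)}$ bound; your explicit absorption of the factor $2^{g(k)}$ and of the reduction's polynomial overhead is a bit more detailed but not a different argument.
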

\begin{proof}
    The reduction in the proof of \Cref{thm:PO:verify:coNPh} constructs, from an instance~$(H,k)$ of \probName{$k$-Clique}, a game with~$\numAgents = k$ agents and an allocation~$\alloc$ such that~$H$ contains a clique of size~$k$ if and only if~$\alloc$ is not Pareto optimal. Suppose that an algorithm~$\mathscr{A}$ verifies Pareto optimality in~$f(\numAgents)\cdot|V(G)|^{\oh{\numAgents}}$ time. Then, given an instance~$(H,k)$ of \probName{$k$-Clique}, we construct the game and the allocation in polynomial time, run~$\mathscr{A}$, and report that a clique exists if and only if~$\mathscr{A}$ rejects. As~$|V(G)| = |V(H)| + k$ and~$\numAgents = k$, this decides \probName{$k$-Clique} in~$g(k)\cdot|V(H)|^{\oh{k}}$ time for a computable function~$g$, which contradicts ETH~\citep{ChenHKX2006}.
\end{proof}

We conclude this section by presenting \XP algorithms for parameterization by the number of agents for all efficiency notions we are interested in, complementing the above hardness results. The algorithms employ a simple brute-force approach, trying all possible allocations. However, interestingly enough, such a naïve enumeration is, under ETH, asymptotically optimal.

\begin{theorem}
    \label{thm:USW:numAgents:XP}\label{thm:ESW:numAgents:XP}
    \label{thm:NSW:numAgents:XP}\label{thm:PO:numAgents:XP}
    There is an algorithm running in~$|V(G)|^\Oh{\numAgents}$ time that, given a game~$\Gamma$, finds an allocation~$\alloc$ that (i) maximizes USW, (ii) maximizes ESW, (iii) maximizes NSW, or (iv) is Pareto optimal. Moreover, unless ETH fails, there is no algorithm solving problems (i)-(iii) in time~$f(\numAgents)\cdot |V(G)|^\oh{\numAgents}$ for any computable function~$f$. 
\end{theorem}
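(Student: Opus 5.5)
The algorithm is plain brute-force enumeration, and the lower bound follows by re-reading the reduction in \Cref{thm:USW:NPc} through the lens of ETH.

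\textbf{Upper bound.} Enumerate all injective maps $\alloc\colon\agents\to V(G)$; there are at most $|V(G)|^{\numAgents}$ of them.
\begin{enumerate*}[label=(\roman*)]
\item For each allocation, compute $x_i = |\{j\neq i \mid \alloc(j)\in N_G(\alloc(i))\}|$ for every agent $i$. This takes $\Oh{\numAgents^2}$ adjacency queries.
\item Evaluate $\util_i(x_i)$ and form $\USW(\alloc)$, the minimum, and the product. The product is a number of bit-length at most $\numAgents$ times the input bit-length, so arithmetic on it is polynomial.
\item Keep the best allocation seen for each of USW, ESW and NSW.
\end{enumerate*}
The total running time is $|V(G)|^{\numAgents}\cdot (\numAgents+|\Gamma|)^{\Oh{1}}\subseteq |V(G)|^{\Oh{\numAgents}}$. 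Input size dominated by the $\numAgents^2$ utility values is fine here, since polynomial factors in $|\Gamma|$ are allowed in an XP bound.

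For (iv) I would not test Pareto optimality of each candidate; that would be pairwise over allocations, giving $|V(G)|^{2\numAgents}$, which is still $|V(G)|^{\Oh{\numAgents}}$ but wasteful. Instead, I take a USW-maximizing allocation among those found in (i), breaking ties arbitrarily. Any Pareto improvement over it would strictly increase the utilitarian welfare, because utilities are real-valued and one coordinate strictly rises while none falls. So a USW maximizer is Pareto optimal, and (iv) comes for free from (i).

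\textbf{Lower bound.} I reuse the reduction from \probName{$k$-Clique} in the proof of \Cref{thm:USW:NPc}. It produces $G=H$ (or its complement, for decreasing utilities), $\numAgents=k$, and $|V(G)|=|V(H)|$, and it runs in polynomial time. An algorithm for (i), (ii) or (iii) running in $f(\numAgents)\cdot|V(G)|^{\oh{\numAgents}}$ time would output an optimal allocation. Comparing its welfare to $\zeta$ then decides \probName{$k$-Clique} in $f(k)\cdot|V(H)|^{\oh{k}}$ time, contradicting ETH by \citet{ChenHKX2006}, exactly as in \Cref{cor:PO:verify:ETH}.

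The only delicate point I foresee is bookkeeping for NSW: exact product values can be large, but computing them is still polynomial per allocation. The reduction is parameter-preserving with $\numAgents=k$ exactly, so the $o(\numAgents)$ exponent transfers directly. I do not expect a genuine obstacle; the main thing to get right is justifying (iv) via the USW-maximizer argument rather than via the \coNP-hard verification.
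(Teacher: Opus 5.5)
Your proposal is correct and follows the paper's proof essentially step for step. Both arguments use brute-force enumeration of all $|V(G)|^{\Oh{\numAgents}}$ allocations for (i)--(iii), return a USW maximizer for (iv) because any Pareto improvement would strictly increase USW, and obtain the ETH lower bound from the parameter-preserving \probName{$k$-Clique} reduction of \Cref{thm:USW:NPc} with $\numAgents = k$.
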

\begin{proof}
    For problems (i)-(iii), the algorithm enumerates all possible allocations~$\alloc$, computes the value of the respective notion of social welfare, and keeps the allocation with the maximum value. As there are~$\numAgents$ agents, there are~$|V(G)|^\Oh{\numAgents}$ different allocations, and the social welfare for each allocation can be computed in polynomial time. For (iv), we simply return the allocation maximizing USW, as it is necessarily Pareto optimal: if not, then there is an allocation~$\alloc'$, where no agent is worse off and at least one agent is better off, so the overall utility is strictly higher than in~$\alloc$; a contradiction with~$\alloc$ being the max USW allocation.

    For the running time lower-bound, it is known that \probName{$k$-Clique} cannot be solved in~$g(k)\cdot |V(H)|^\oh{k}$ time for any computable function~$g$, unless the ETH is false~\citep{ChenHKX2006}. Assume that there is an algorithm~$\mathscr{A}$ solving the problem (i) in~$f(\numAgents)\cdot |V(G)|^\oh{\numAgents}$ time (for problems (ii) and (iii), the argument is identical). Then, given an instance~$\mathcal{I}$ of \probName{$k$-Clique}, we can use the reduction from \Cref{thm:USW:numAgents:Wh} to obtain an equivalent game~$\Gamma$, solve~$\Gamma$ using~$\mathscr{A}$, and return the same response for~$\mathcal{I}$. The result is clearly correct as~$\mathcal{I}$ and~$\Gamma$ are equivalent. Moreover, as the reduction can be done in polynomial time, we decided~$\mathcal{I}$ in time~$|V(H)|^\Oh{1} + f(\numAgents)\cdot |V(G)|^\oh{\numAgents} = f(\numAgents)\cdot |V(G)|^\oh{\numAgents} = g(k)\cdot |V(H)|^\oh{k}$, which contradicts ETH. Hence, it is unlikely that such an algorithm~$\mathscr{A}$ for the problem (i) exists.
\end{proof}

\section{Conclusions}\label{sec:conclusions}

As our results indicate, the existence and tractability of the underlying computational problem change drastically based on the combination of stability notion, utility function, and graph topology. One question that remains open is to provide a complete dichotomy for~$1$-jump-stability on trees. Our algorithm for binary utilities cannot immediately work for other utility functions, such as single-peaked utilities, although there might exist a sophisticated modification that makes it work. 

A different direction is to consider other stability notions, with contractual-based stability notions being prime candidates~\citep{BogomolnaiaJ2002,SungD2007,GairingS2019,BrandtBT2024}.
This family is a different refinement of jump-stability, where a jump of an agent must be ``approved'' either by her current neighbors (out-contractual), or by her potential new neighbors after the jump (in-contractual), or both (in-out-contractual). Here, there is ample room for choosing what ``approval'' actually means: it can be (weak) majority, veto, or a welfare-based function. Although a couple of our results can be extended to a subset of cases, we believe there are novel algorithmic insights that can yield positive results for these notions.

\section*{Acknowledgements} 
This research was supported in part by the National Science Centre, Poland, grant number UMO-2025/58/A/ST6/00371, by the EPSRC grant EP/X039862/1, and co-funded by the European Union under the project Robotics and Advanced Industrial Production (reg. no. CZ.02.01.01/00/22\_008/0004590).

\bibliographystyle{plainnat}
\bibliography{references}

\end{document}